\newtheorem{remark}{Remark}
\newcommand{\simdot}{\mathrel{\sim^{\cdot}}}
\newcommand{\candP}[1]{\mathsf{P}_{#1}}   
\newcommand{\pLTS}[1]{\mathcal{P}_{#1}}
\begin{document}

\title{Semantic Fusion: Verifiable Alignment in Decentralized Multi-Agent Systems}

\author{Sofiya Zaichyk}
\orcid{}
\affiliation{%
  \institution{Innovative Defense Technologies}
  \city{Arlington}
  \state{VA}
  \country{USA}
}

\begin{abstract}
We present \emph{Semantic Fusion (SF)}, a formal framework for decentralized semantic coordination in multi-agent systems. SF allows agents to operate over scoped views of shared memory, propose structured updates, and maintain global coherence through local ontology-based validation and refresh without centralized control or explicit message passing. The central theoretical result is a bisimulation theorem showing that each agent’s local execution is behaviorally equivalent to its projection of the global semantics, in both deterministic and probabilistic settings. This enables safety, liveness, and temporal properties to be verified locally and soundly lifted to the full system. SF supports agents whose update proposals vary across invocations, including those generated by learned or heuristic components, provided updates pass semantic validation before integration. We establish deterministic and probabilistic guarantees ensuring semantic alignment under asynchronous or degraded communication. To validate the model operationally, we implement a lightweight reference
architecture that instantiates its core mechanisms. A 250-agent simulation evaluates these properties across over 11,000 validated
updates, demonstrating convergence under probabilistic refresh, bounded communication, and resilience to agent failure. Together,
these results show that Semantic Fusion provides a formal and scalable basis for verifiable autonomy in decentralized systems.
\end{abstract}

\begin{CCSXML}
<ccs2012>
   <concept>
      <concept_id>10010147.10010178.10010205</concept_id>
      <concept_desc>Computing methodologies~Multi-agent systems</concept_desc>
      <concept_significance>500</concept_significance>
   </concept>
   <concept>
      <concept_id>10003752.10003809.10010047</concept_id>
      <concept_desc>Theory of computation~Distributed algorithms</concept_desc>
      <concept_significance>500</concept_significance>
   </concept>
   <concept>
      <concept_id>10003752.10003809.10010170.10010174</concept_id>
      <concept_desc>Theory of computation~Temporal logic</concept_desc>
      <concept_significance>300</concept_significance>
   </concept>
   <concept>
      <concept_id>10003752.10003809.10010170.10010171</concept_id>
      <concept_desc>Theory of computation~Probabilistic computation</concept_desc>
      <concept_significance>300</concept_significance>
   </concept>
   <concept>
      <concept_id>10002950.10003624.10003633.10003644</concept_id>
      <concept_desc>Mathematics of computing~Probabilistic algorithms</concept_desc>
      <concept_significance>100</concept_significance>
   </concept>
</ccs2012>
\end{CCSXML}

\ccsdesc[500]{Computing methodologies~Multi-agent systems}
\ccsdesc[500]{Theory of computation~Distributed algorithms}
\ccsdesc[300]{Theory of computation~Temporal logic}
\ccsdesc[300]{Theory of computation~Probabilistic computation}
\ccsdesc[100]{Mathematics of computing~Probabilistic algorithms}

\keywords{Multi-Agent Systems, Decentralized Reasoning, Shared Semantic Memory, Bisimulation, Formal Verification, Ontology-Scoped Reasoning, Probabilistic Semantics}

\maketitle

\section{Introduction}\label{sec:intro}

Rising system complexity and increasing coordination demands have exposed a longstanding tension between adaptability and verifiability in autonomous and multi-agent architectures. Symbolic systems provide structure, interpretability, and formal guarantees, but often fail to operate reliably under dynamic or asynchronous conditions. Neural components, including learned policies or language-model–based interpreters, offer flexibility and generalization but tend to produce outputs that lack the explicit structure needed for traceability, compliance, and modular verification. These limitations are especially acute in decentralized settings, where agents must reason and act independently without centralized control or global observability.

In many practical multi-agent systems, such as distributed sensing and robotic warehouse coordination, agents operate under limited bandwidth, partial observability, and asynchronous decision cycles. Increasingly, these agents rely on learned modules to process structured and unstructured input. Although such modules can improve adaptability, their outputs may be ambiguous or inconsistent in terminology and often lack clear provenance. In settings without centralized control or globally shared state, agents must coordinate using only partial information while ensuring that decisions remain aligned and verifiable. This motivates the need for a principled framework that supports decentralized semantic consistency, scoped reasoning, and provable alignment, even when agent behavior is adaptive or involves nondeterministic update generation.

This paper introduces \emph{Semantic Fusion}, a formal framework for decentralized semantic alignment in multi-agent systems. Each agent operates over a constrained projection of shared memory (its \emph{slice}), maintains a partial view of evolving system state, and proposes structured updates that are subject to local semantic validation. Coherence and alignment follow from how agents interact with shared state, rather than being externally imposed. The architecture avoids explicit message passing and global synchronization, enabling coordination through disciplined memory interaction alone.

Robustness under degraded or asynchronous conditions forms a core motivation for this work. In many operational environments, communication loss, adversarial interference, and uncertainty are expected rather than exceptional. Coordination models that rely on centralized control or globally visible state often fail in such settings. By enforcing semantic coherence through scoped memory interaction, Semantic Fusion enables decentralized operation even when connectivity fluctuates or observability becomes unreliable.

Our central theoretical result is a bisimulation theorem that establishes behavioral equivalence between each agent’s local execution and the global system semantics projected onto its slice. This enables safety and temporal properties to be verified locally and soundly extended to the full system, addressing a common difficulty in connecting localized reasoning to global guarantees. The framework yields deterministic guarantees such as semantic coherence, convergence of agent-local memories, and semantically scoped causal isolation. We further extend the model to accommodate nondeterministic update generators, showing that global safety properties remain compositionally verifiable from agent-level traces. A final result establishes almost-sure liveness, ensuring that agents’ local memories converge to the evolving global semantic state over time.

This makes it possible to design agent architectures where global constraints, including safety rules, temporal dependencies, or policy boundaries, can be enforced through local validation and scoped memory access, without requiring global state awareness. While principles such as scoped memory access or bisimulation appear individually in prior work, their combination here enables a framework that supports deterministic and probabilistic reasoning with provable global alignment through slice-scoped semantic interaction. This includes agents whose decision processes incorporate learned modules, provided their outputs undergo local semantic validation before affecting shared state.

To demonstrate feasibility, we implement a reference architecture that instantiates the model’s core mechanisms. While not a contribution in itself, the system provides an executable validation of the theoretical framework under dynamic, asynchronous conditions. A 250-agent simulation in a search-and-rescue scenario illustrates the behaviors predicted by the formal results and shows that coordinated global behavior can arise from interaction with shared semantic state rather than centralized control.

Together, these results show that a coherent and eventually convergent global execution can be realized through slice-scoped validation and structured state updates, without centralized memory, global clocks, or direct messaging, and with formal alignment to projected global semantics. Although we refer to a global semantic memory for analysis, the system does not require a centralized store. Its semantics arise from the distributed ensemble of local executions, which collectively realize the global trace.

The remainder of the paper is organized as follows. Section~\ref{sec:related-work} surveys related work. Section~\ref{sec:core-constructs} introduces the core formal constructs. Section~\ref{sec:semantic-fusion} presents the reference architecture. Section~\ref{sec:formal-model} develops the formal properties and proofs. Section~\ref{sec:experimental-val} provides empirical validation. Section~\ref{sec:discussion} discusses deployment, extensions, and future directions. Section~\ref{sec:conclusion} concludes.

\section{Related Work}\label{sec:related-work}
Early efforts in multi-agent coordination were shaped by symbolic architectures grounded in explicit representations of meaning. Blackboard systems such as the Hearsay-II speech recognizer~\cite{erman1980hearsay} and cognitive frameworks like SOAR~\cite{laird1987soar} offered interpretable control through rule-based activation over shared memory spaces. However, these systems often proved brittle in dynamic or adversarial environments, where fixed rules limited adaptability and scalability~\cite{russell2009}.

Connectionist approaches—particularly deep learning and reinforcement learning agents~\cite{lecun2015deep}—brought flexibility through data-driven inference, but often lacked semantic grounding. Foundation models~\cite{brown2020gpt3, chowdhery2023palm, touvron2023llama} extended these capabilities to language, planning, and tool use, but remain prone to hallucination and semantic inconsistency~\cite{bender2021dangers, bommasani2021opportunities}, especially in settings requiring inter-agent coordination. In distributed multi-agent systems (MAS), this lack of persistent shared semantics introduces ambiguity and inconsistency that complicates the preservation of global invariants.

Recent orchestration frameworks such as LangChain, LangGraph, and HuggingFace Agents~\cite{langchain, langgraph, huggingfaceagents} attempt to bridge symbolic and connectionist paradigms by composing LLM workflows into tool-chaining pipelines. These platforms support dynamic behavior through prompt templates and modular controllers, but remain episodic in nature, rely heavily on sensitive prompt engineering, and lack persistent, verifiable state. Critically, they provide no formal grounding in semantic coherence or inter-agent alignment.

Closer to our setting, SHIMI~\cite{helmi2025shimi} and DAMCS~\cite{yang2025llm} introduce decentralized memory structures for LLM-based agents. SHIMI uses a hierarchical semantic store; DAMCS proposes a distributed graph memory. Both advance semantic coordination, but neither supports structured update validation, scoped propagation, formal correctness guarantees, or liveness under probabilistic refresh. Our model differs by enforcing ontology-conformant validation and scoped propagation, allowing LLM-based agents to align and coordinate under asynchronous, partial observability with formal guarantees.

Conflict-free replicated data types (CRDTs)~\cite{shapiro2011crdt} offer a separate foundation for distributed memory, ensuring convergence under concurrent writes. However, CRDTs assume reliable delivery of all operations, disseminate updates indiscriminately, and offer no semantic scoping or modular guarantees. In contrast, our framework enforces local validation and slice-scoped update propagation, allowing agents to converge probabilistically to semantically relevant state without requiring delivery of all intermediate updates. This supports modular reasoning through agent-level bisimulation and per-element convergence theorems under partial delivery.

Our work also diverges from classical knowledge-based programs (KBPs)~\cite{fagin2003knowledge}, which use modal logic and epistemic conditions to govern agent behavior under fixed communication protocols. While KBPs support sound knowledge-driven decision making, they lack structured semantic memory and do not support modular reasoning over asynchronous state evolution. We instead formalize agent interaction through a shared, ontology-scoped memory system where correctness is ensured by local validation rather than message-passing logic.

Table~\ref{tab:comparison} summarizes these distinctions. While prior frameworks emphasize replicated memory or informal composition mechanisms, our work introduces the first formal foundation for verifiable decentralized coordination over structured semantic state. It enables constrained update propagation and local verification of global safety properties, while maintaining coherent semantic evolution under probabilistic proposals.

\setlength{\tabcolsep}{4pt}
\begin{table}[ht]
\centering
\caption{Comparison with Prior Frameworks}
\begin{tabular}{@{}lccccc@{}}
\toprule
\textbf{Feature} & \textbf{CRDTs} & \textbf{SHIMI} & \textbf{DAMCS} & \textbf{KBPs} & \makecell{\textbf{Semantic}\\\textbf{Fusion}} \\
\midrule
\multicolumn{6}{l}{\emph{Memory and Semantics}} \\
\quad Ontology-scoped memory & No & Yes & Yes & Implicit & Yes \\
\quad Structured validation & No & No & No & Implicit & Yes \\
\quad Slice-local reasoning & No & Yes & Yes & No & Yes \\
\quad Causal isolation & No & No & No & No & Yes (Thm.~\ref{thm:causal-isolation}) \\
\quad Failure resilience & Partial & No & No & No & Yes (Prop.~\ref{prop:local-failure}) \\
\midrule
\multicolumn{6}{l}{\emph{Formal Guarantees}} \\
\quad Semantic coherence & Partial & No & No & Yes (under model) & Yes (Thm.~\ref{thm:semantic-coherence}) \\
\quad Temporal bisimulation & No & No & No & No & Yes (Thm.~\ref{thm:slice_bisim}) \\
\quad Probabilistic coherence & No & No & No & No & Yes (Thm.~\ref{thm:prob_slice_bisim}) \\
\quad Global property completeness & No & No & No & No & Yes (Thm.~\ref{thm:slice-completeness-safety}) \\
\quad Almost-sure slice convergence & No & No & No & No & Yes (Thm.~\ref{thm:dynamic-convergence}) \\
\midrule
\multicolumn{6}{l}{\emph{Scalability}} \\
\quad Minimal communication & No & No & No & No & Yes (Thm.~\ref{thm:comm_lower_bound}) \\
\quad Message complexity & $\mathcal{O}(n)$ & $\mathcal{O}(n)$ & $\mathcal{O}(n)$ & $\mathcal{O}(n)$ & $\Theta(d)$ \\
\bottomrule
\end{tabular}
\label{tab:comparison}
\end{table}

\section{Core Constructs}
\label{sec:core-constructs}

To formalize slice-scoped interaction in multi-agent systems, we begin by introducing the constructs that determine how agents access shared semantic memory and validate the updates they propose. Together, these define the structure of memory access and observation, and they govern how updates are checked and propagated in a decentralized setting. They form the foundation for the formal results in Section 5 and are later instantiated in a concrete system for empirical validation (Section~\ref{sec:semantic-fusion}).

\subsection{Ontologies and Semantic Memory}
\label{subsec:ontologies}

At the heart of this framework is a global ontology $\mathcal{O}$ that
defines the system’s conceptual universe: types, relationships, constraints,
and inheritance structure. It acts as both a schema and a constraint system
for all memory evolution.

The shared semantic memory $\mathcal{M}(t)$ at time $t$ is a set of grounded,
ontology-compliant statements reflecting the current state of the world as
known to the system. Unlike message logs or static knowledge graphs, this memory is evolving and validated, and agents interact with it through slice-scoped projections rather than as a passive store.

\subsection{Ontology Slices and Semantic Projections}
\label{subsec:ontology-slices}

To support modularity and decentralization, every agent $a \in A$ is granted
an ontology slice $O_a \subseteq O$ that bounds the semantic entities it may
read from or write to in the shared memory. The agent may reason over
additional private state, but any update that crosses the shared boundary
must validate against $O_a$.

\begin{definition}[Memory Projection]
Given a global memory $\mathcal{M}(t)$ and an ontology slice $O_a$, the memory
projection for agent $a$ is
\[
\pi_{O_a}(\mathcal{M}(t)) = \{ s \in \mathcal{M}(t) \mid \mathrm{dom}(s) \in
O_a \},
\]
where $\mathrm{dom}(s)$ is the root entity of the statement. This projection
isolates the portion of the shared state relevant to agent $a$.
\end{definition}

\begin{definition}[Semantic Slice]
At time $t$, the agent’s local view is the semantic slice
\[
S_a(t) = (O_a, M_a(t)), \qquad M_a(t) \subseteq \pi_{O_a}(\mathcal{M}(t'))
\]
for some $t' \le t$. The local memory $M_a(t)$ may lag the global projection
due to asynchronous refresh propagation. This decoupling is what enables
coordination without strict synchronization.
\end{definition}

\subsection{Update Proposals and Validation}
\label{subsec:update-proposals}

Agents coordinate by reasoning over their current slice and proposing ontology-governed updates that reflect intended or observed changes.
Physical actions may occur independently, but any effects an agent chooses to represent in shared memory must be introduced through subsequent validated update proposals.

\begin{definition}[Update Proposal]
A structured proposal $P_a$ is a partial function
\[
P_a : S_a(t) \rightarrow \Delta S_a
\]
mapping the agent’s current slice to a proposed set of changes $\Delta S_a$.
\end{definition}

Not all proposals are safe or meaningful. To maintain semantic coherence, all
updates must be validated against the shared ontology.

\begin{definition}[Validation]
A proposal $P_a$ is valid if
\[
\mathcal{O} \vdash P_a .
\]
Only validated updates $\Delta S_a^{\mathrm{valid}}$ may be integrated into
$\mathcal{M}(t)$.
\end{definition}

This guarantee holds regardless of whether an agent’s internal decision mechanism is symbolic, learned, or generative, because only ontology-conformant, locally validated updates enter the global state. The ontology acts as a
semantic gatekeeper for decentralized coordination.

\subsection{Refresh Propagation and Scoped Synchronization}
\label{subsec:refresh-propagation}

Once an update has been validated and incorporated into shared memory, the
framework ensures that refresh notifications are scoped to agents whose
slices intersect the modified entities, while agents with no overlap remain
unaffected.

\begin{definition}[Refresh Notification]
\label{def:refresh}
Let $\Delta S_a$ be a validated update from agent $a$. The refresh
notification is
\[
\tau(\Delta S_a) \rightarrow \mathcal{E},
\]
where $\mathcal{E}$ is the set of modified entities. Agents with $O_b \cap
\mathcal{E} \ne \varnothing$ may retrieve and locally revalidate $\Delta
S_a$.
\end{definition}

\begin{remark}
In the deterministic semantics, all agents implement the same
ontology-scoped validation predicate $V$ restricted to their slices.
Consequently, if a validated update $\Delta S_a$ is retrieved by any agent
whose slice intersects its scope, slice-local revalidation succeeds; the
failure case is excluded by assumption. Local revalidation remains part of
the semantics because safety is enforced slice-locally: agents do not have
global state and must independently confirm that an update is admissible
for their slice.

In the probabilistic extension and in practical deployments, validators
may be imperfect, stale, or approximate, and an agent may occasionally
reject a globally valid update. In such cases the agent performs a semantic
stutter step, leaving its local memory unchanged. Theorems~5.24 and~5.25
establish that such disagreements are transient with high probability.
\end{remark}

This selective refresh mechanism ensures that agents synchronize only on
changes relevant to their semantic domain. It enables causal isolation,
bounded communication, and eventual slice convergence without central
coordination.

\subsection{Constructive Summary}
\label{subsec:constructive-summary}

These constructs define more than a vocabulary; they specify a coherent model for decentralized semantic reasoning. Ontology slices shape how each agent contributes to shared state by constraining how perceptions are represented and integrated. Semantic slices track local memory as it evolves through selectively accepted updates. Structured proposals capture intended changes, and validation enforces their coherence with system-wide rules. Refresh propagation maintains relevance without requiring global synchronization. The framework imposes no requirements on an agent’s internal policy; it constrains only the structure and admissibility of the updates the agent contributes.

This model reframes coordination by grounding interaction in ontology-scoped memory rather than direct belief exchange. By constraining how meaning evolves, it enables formal reasoning, modular verification, and traceable coordination. Although ontology-governed memory may restrict expressivity, it provides the structure needed for rigorous correctness claims in decentralized settings. Future work may examine hybrid designs that incorporate both structured and unstructured state, permitting selective relaxation while preserving global guarantees.

The full formal model comprising these constructs is referred to as Semantic
Fusion (SF) throughout this paper. In Section~\ref{sec:semantic-fusion} we
instantiate these constructs in a lightweight reference system, not as the
core contribution but as a confirming mechanism that demonstrates the model’s
operational feasibility under realistic conditions.

\section{Reference Architecture: An Instantiation of Semantic Fusion (SF)}
\label{sec:semantic-fusion}

To assess the operational feasibility of the Semantic Fusion (SF) model, we implemented a lightweight reference system that instantiates SF’s core constructs: scoped slices, local validation, and selective refresh. The architecture provides an executable realization of the model’s semantics and enables empirical confirmation of its behavior under asynchronous, dynamic conditions. The architecture describes a shared-memory abstraction for convenience, but the semantics do not depend on a centralized memory structure. Any architecture that preserves slice-scoped validation and refresh induces the global semantics.

\subsection{Agent Architecture}

Each agent comprises three modular subsystems aligned with the formal model. The \textit{slice manager} maintains the agent’s scoped semantic slice 
$S_a(t) = (O_a, M_a(t))$, processes refresh notifications, and determines which entities fall within the agent’s ontology-restricted view. The \textit{inference engine} performs local reasoning over $M_a(t)$ using arbitrary internal mechanisms (e.g.\ symbolic rules, probabilistic policies, or hybrid decision logic). The \textit{update generator} constructs structured update proposals $P_a$ and applies ontology-scoped validation; if $O \vdash P_a$, the update is integrated and triggers a refresh notification to relevant peers.

Agents operate fully asynchronously and without centralized coordination. No global clock or ordering primitive is assumed; semantic alignment emerges through slice-scoped interaction with shared memory. A schematic overview is provided in Figure~\ref{fig:arch}.

\begin{figure}[ht]
  \centering
  \includegraphics[width=0.85\linewidth]{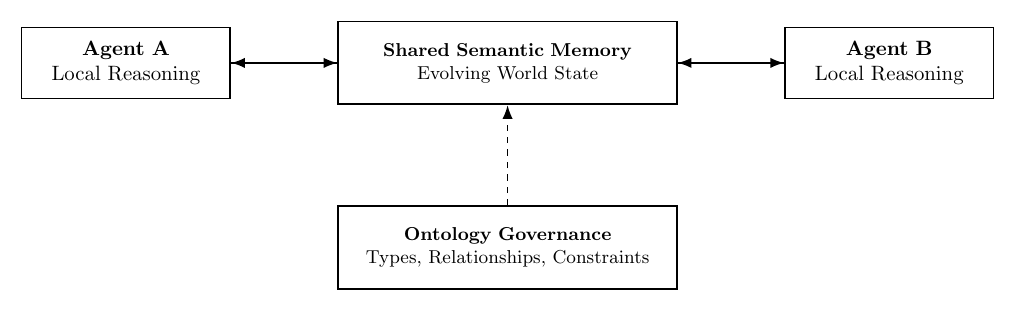}
  \caption{Reference architecture instantiating Semantic Fusion (SF). 
  Agents operate over scoped semantic slices, proposing and validating structured updates to shared semantic memory. 
  Coherence and coordination arise from ontology-scoped validation and refresh propagation rather than explicit messaging or global control.}
  \label{fig:arch}
\end{figure}

\subsection{Lifecycle and Communication}

Each agent follows an asynchronous execution loop. Upon receiving a refresh notification, the slice manager determines whether the referenced entities intersect its slice; if so, the agent selectively retrieves the corresponding update, re-validates it, and integrates it into local memory. All reasoning occurs over the slice $S_a(t)$, enabling the agent to infer goals or generate proposals based solely on ontology-scoped information.

When an agent generates a valid update proposal, it commits the change locally and issues a scoped notification containing only the identifiers of the modified entities. Agents whose slices intersect those identifiers pull and validate the update independently. This selective mechanism minimizes communication, enforces causal isolation (Theorem~\ref{thm:causal-isolation}), and preserves the trace semantics needed for local–global bisimulation.

The reference prototype implements a minimal scoped-delivery mechanism sufficient to realize the semantics of update propagation and retrieval described here. More sophisticated dissemination strategies, such as content-based publish/subscribe systems, could be used in practical deployments, but lie outside the scope of this work. See Appendix~\ref{app:ref-arch} for a high-level outline of the execution algorithm.

\subsection{Hybrid Reasoning and Validation}

The reference architecture is agnostic to the agent’s internal reasoning strategy. 
Agents may employ symbolic logic, probabilistic decision modules, or learning-based components. 
All proposals, however, pass through the same ontology-scoped validation interface before integration. 
This decoupling enables heterogeneous agents to operate over a shared semantic substrate while maintaining safety. To accommodate stochastic or uncertain proposal mechanisms, agents may employ probabilistic validators as formalized in Theorem~\ref{thm:prob-coherence}, allowing uncertain proposals to be handled in a principled way while preserving semantic guarantees.

\subsection{Architecture Summary}

The SF reference architecture provides a concrete instantiation of the theoretical model presented in Section~\ref{sec:core-constructs}. 
It supports asynchronous execution and enforces ontology-scoped validation with selective refresh. The architecture is not intended to optimize performance; its role is to show that constrained semantic reasoning can be realized in a decentralized setting and that the resulting system behavior matches the formal analysis. The guarantees developed in the next sections are then exercised empirically in a dynamic multi-agent simulation.

\section{Formal Model and Properties}\label{sec:formal-model}

This section develops the formal model of Semantic Fusion. We show that decentralized agents operating over constrained projections of shared state can maintain global coherence and behavioral alignment through structured updates and local validation, without centralized coordination. The results indicate that ontology-scoped memory interaction is sufficient to sustain the global invariants required for correctness.

We organize the results into three families:
\begin{itemize}
    \item \textbf{Deterministic guarantees:} semantic coherence, slice convergence, causal isolation, and communication bounds;
    \item \textbf{Behavioral equivalence:} a slice-global stuttering bisimulation that supports modular temporal reasoning;
    \item \textbf{Probabilistic guarantees:} probabilistic bisimulation of agent executions and resilience to imperfect update validation.
\end{itemize}
 
Theorems~\ref{thm:semantic-coherence}-
 \ref{thm:prob_slice_bisim} assume reliable scoped delivery (Assumption S2, Appendix~\ref{app:proof-thm10}), as would be enforced by a publish-subscribe infrastructure or reliable multicast. Theorem~\ref{thm:dynamic-convergence} relaxes this to probabilistic delivery with i.i.d. drop behavior and proves eventual convergence.

Each result is stated and sketched in this section. Full derivations and proofs of key theorems are provided in Appendix~\ref{app:proofs}. We distinguish between foundational system guarantees (labeled as Theorems), derived consequences (Corollaries), helper results (Lemmas), and structural or performance-related invariants (Propositions).

\subsection{Preliminaries and Notation}\label{subsec:preliminaries}

We consider a system of autonomous agents \( \mathcal{A} = \{ a_1, a_2, \ldots, a_n \} \) operating over a shared semantic memory \( \mathcal{M}(t) \), structured according to a global ontology \( \mathcal{O} \). Each agent \( a \in \mathcal{A} \) maintains a local memory \( M_a(t) \), scoped by an ontology slice \( O_a \subseteq \mathcal{O} \). Ontological semantics, slice construction, projections, update proposals, and refresh protocols are defined in Section~\ref{sec:core-constructs}. The symbols used in this section are summarized in Table~\ref{tab:notation}.

\begin{table}[ht]
\centering
\caption{Core Notation for Semantic Fusion Formal Model}
\begin{tabular}{@{}lp{0.7\linewidth}@{}}
\toprule
\multicolumn{2}{l}{\textbf{Agents and Ontology}} \\
\midrule
$A = \{a_1, a_2, \dots, a_n\}$ & Set of autonomous agents \\
$\mathcal{U}$ & Universe of semantic entities \\
$\mathcal{O}$ & Global ontology over $\mathcal{U}$ \\
$O_a \subseteq \mathcal{O}$ & Ontology slice of agent $a$ \\

\multicolumn{2}{l}{\textbf{Memory and State}} \\
\midrule
$\mathcal{M}(t)$ & Global semantic memory at time $t$ \\
$M_a(t)$ & Local memory of agent $a$ \\
$\pi_{O_a}(\mathcal{M})$ & Projection of $\mathcal{M}$ onto $O_a$ \\
$S_a(t) = (O_a, M_a(t))$ & Semantic slice of agent $a$ \\

\multicolumn{2}{l}{\textbf{Updates and Execution}} \\
\midrule
$P_a$ & Structured update proposals by agent $a$ \\
$\Delta S_a$ & Proposed modifications \\
$\Delta S_a^{\text{valid}}$ & Ontology-validated update \\
$\tau(\Delta S_a)$ & Refresh notification for affected entities \\
$\gamma_a$ & Probabilistic update generator for agent $a$ \\

\multicolumn{2}{l}{\textbf{Labeled Transition Systems}} \\
\midrule
$\mathcal{G}$ & Global LTS over $\mathcal{M}(t)$ \\
$\mathcal{T}_a$ & Agent $a$'s local LTS over $M_a(t)$ \\
$\mathcal{G}|_{O_a}$ & Projection of $\mathcal{G}$ onto $O_a$ \\
\bottomrule
\end{tabular}
\label{tab:notation}
\end{table}

\subsection{Deterministic Guarantees}\label{subsec:deterministic-guarantees}
We begin by establishing that the system preserves semantic coherence over time, provided that all updates are validated against the shared ontology before integration.

\begin{lemma}[Eventual Delivery of Scoped Updates]\label{lem:delivery}
Let $\Delta S_b \subseteq \mathcal{M}(t)$ be a validated update, and let $a$ be an agent with ontology slice $O_a \subseteq \mathcal{O}$. If $\text{dom}(\Delta S_b) \cap O_a \ne \varnothing$, then:
\begin{equation}
\quad \exists t' \geq t \quad \text{such that} 
\quad \Delta S_b \subseteq M_a(t') \subseteq \pi_{O_a}(\mathcal{M}(t')).
\end{equation}
This lemma assumes reliable refresh propagation and does not apply under probabilistic delivery, which is handled separately in Theorem~\ref{thm:dynamic-convergence}.
\end{lemma}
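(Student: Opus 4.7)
The plan is to trace the update $\Delta S_b$ through the refresh propagation pipeline defined in Section~\ref{subsec:refresh-propagation}, leveraging Assumption S2 (reliable scoped delivery) and the deterministic-validation Remark that accompanies Definition~\ref{def:refresh}. The argument naturally decomposes into three steps: (i) notification delivery, (ii) local revalidation and integration, and (iii) the subset inclusion into the current global projection.

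First, I would observe that by hypothesis $\Delta S_b$ has already passed ontology-scoped validation and been integrated, so the refresh mechanism of Definition~\ref{def:refresh} fires with modified-entity set $\mathcal{E} \supseteq \mathrm{dom}(\Delta S_b)$. Because $\mathrm{dom}(\Delta S_b) \cap O_a \neq \varnothing$, we have $\mathcal{E} \cap O_a \neq \varnothing$, so agent $a$ is among the intended recipients. Invoking Assumption S2, the notification is delivered to $a$ in finite time; let $t_1 \geq t$ denote a time by which $a$ has issued the corresponding pull.

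Second, upon retrieval, $a$ performs slice-local revalidation of $\Delta S_b$ against $O_a$. By the deterministic-semantics Remark in Section~\ref{subsec:refresh-propagation}, every agent implements the same ontology-scoped validation predicate $V$ restricted to its slice, so an update that was globally valid against $\mathcal{O}$ remains valid against $O_a$ on the entities within $\mathrm{dom}(\Delta S_b) \cap O_a$; the failure branch is excluded by assumption. Consequently there exists $t' \geq t_1$ at which $\Delta S_b \subseteq M_a(t')$.

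Third, for the inclusion $M_a(t') \subseteq \pi_{O_a}(\mathcal{M}(t'))$, I would appeal to monotonic integration of validated updates into $\mathcal{M}$: every element of $M_a(t')$ was integrated globally no later than it was integrated locally, and once admitted it persists in $\mathcal{M}$, so its restriction to $O_a$ lies within $\pi_{O_a}(\mathcal{M}(t'))$; combined with the slice bound from Definition of the semantic slice, this yields the stated containment. The main obstacle I anticipate is making the monotonicity argument airtight: the framework as stated treats updates as structured additions and relies on the ontology gatekeeper to prevent revocation, but a careful proof should either invoke an explicit monotonicity invariant of $\mathcal{M}$ or note that the refresh semantics guarantees $\Delta S_b \subseteq \mathcal{M}(t'')$ for all $t'' \geq t$, from which the projection-subset claim follows immediately. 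Everything else reduces to unwinding the refresh protocol under Assumption S2.
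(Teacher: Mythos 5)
Your proposal is correct and follows essentially the same route as the paper's own (very terse) proof sketch: the refresh notification fires because $\mathrm{dom}(\Delta S_b)\cap O_a\neq\varnothing$, reliable propagation delivers it in finite time, and slice-local revalidation succeeds so the update is integrated. Your third step, justifying $M_a(t')\subseteq\pi_{O_a}(\mathcal{M}(t'))$ via monotonic, non-revocable integration into $\mathcal{M}$, is a detail the paper's sketch simply asserts without argument, and making that invariant explicit is a genuine improvement rather than a deviation.
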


\begin{proof}[Proof Sketch.]
Since $O_a \cap \mathrm{dom}(\Delta S_b) \neq \emptyset$, a refresh notification is eventually delivered to agent $a$ by design. Upon receiving it, $a$ retrieves and integrates $\Delta S_b$ after local validation, ensuring $M_a(t') \subseteq \pi_{O_a}(\mathcal{M}(t')$ for some finite $t' \geq t$.
\end{proof}

\begin{theorem}[Monotonic Semantic Coherence]\label{thm:semantic-coherence}
If all structured updates are validated against the ontology $\mathcal{O}$ prior to integration, then at all times $t$, the global memory $\mathcal{M}(t)$ remains consistent with $\mathcal{O}$.

Formally:
\begin{equation}
\forall t, \quad \mathcal{O} \vdash \mathcal{M}(t)
\end{equation}
\end{theorem}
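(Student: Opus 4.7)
The plan is to prove Monotonic Semantic Coherence by induction on the discrete sequence of integration events that modify $\mathcal{M}$. Because the semantic memory evolves only through validated update integration (per Section~\ref{subsec:update-proposals}), the temporal variable $t$ can be replaced for analysis purposes by the ordinal index of the next integration step, yielding a countable sequence of global states $\mathcal{M}(t_0), \mathcal{M}(t_1), \mathcal{M}(t_2), \ldots$ whose consistency we must certify at every stage.

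For the base case, I would take $\mathcal{M}(t_0)$ to be either empty or an initial state assumed to satisfy $\mathcal{O} \vdash \mathcal{M}(t_0)$; this is a standing well-formedness assumption on system initialization. For the inductive step, suppose $\mathcal{O} \vdash \mathcal{M}(t_k)$ and that an agent $a$'s proposal $P_a$ is integrated at time $t_{k+1}$. By Definition~\ref{subsec:update-proposals} only validated updates $\Delta S_a^{\mathrm{valid}}$ reach the shared memory, and validation requires $\mathcal{O} \vdash P_a$. Writing $\mathcal{M}(t_{k+1}) = (\mathcal{M}(t_k) \setminus \mathrm{retractions}) \cup \Delta S_a^{\mathrm{valid}}$, the inductive conclusion reduces to showing that combining two ontology-entailed fragments preserves ontology entailment. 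I would conclude by invoking the standard monotonicity of $\vdash$ together with the fact that $\mathcal{O}$-conformance is a universally quantified schema condition (each statement conforms to a type/constraint in $\mathcal{O}$), hence closed under union.

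The main obstacle I anticipate is the gap between \emph{local} validation of $P_a$ against $\mathcal{O}$ and \emph{global} consistency of the integrated memory. Pure schema conformance is preserved under union, but if $\mathcal{O}$ includes relational integrity constraints (e.g., functional dependencies, cardinality restrictions, disjointness axioms), then two individually $\mathcal{O}$-valid fragments could jointly contradict $\mathcal{O}$. To close this, I would make explicit that the validation predicate is evaluated in the context of the current shared memory, so that $\mathcal{O} \vdash P_a$ is shorthand for $\mathcal{O} \vdash \mathcal{M}(t_k) \cup \Delta S_a^{\mathrm{valid}}$; this is the only reading under which validation can meaningfully act as a gatekeeper in the sense required by Section~\ref{subsec:update-proposals}. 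Under that reading the inductive step is immediate, and the theorem follows.

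A secondary subtlety worth flagging in the write-up is the treatment of concurrent integrations: if two agents simultaneously propose updates whose union violates $\mathcal{O}$, the result depends on the serialization imposed by the integration layer. I would handle this by appealing to the reliable scoped delivery assumption (Assumption S2) referenced just before Lemma~\ref{lem:delivery}, which effectively serializes integrations so that each $\Delta S_a^{\mathrm{valid}}$ is revalidated against the already-committed state before it is admitted. With that assumption in place, the induction proceeds event by event and $\mathcal{O} \vdash \mathcal{M}(t)$ holds for all $t$.
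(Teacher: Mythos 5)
Your proof takes essentially the same route as the paper's: induction over the discrete sequence of integration events, with the base case given by an $\mathcal{O}$-consistent initial memory and the inductive step given by the validation gate. The paper's own proof is a three-line sketch that stops exactly where you correctly identify the real difficulty: it asserts that because each $\Delta S_a^{\mathrm{valid}}$ individually satisfies $\mathcal{O} \vdash \Delta S_a^{\mathrm{valid}}$, consistency of the union $\mathcal{M}(t_k) \cup \Delta S_a^{\mathrm{valid}}$ follows ``inductively,'' without addressing whether two individually valid fragments can jointly violate relational constraints (functional dependencies, cardinality, disjointness). Your observation that the theorem only goes through if the validation predicate is read contextually, i.e.\ as $\mathcal{O} \vdash \mathcal{M}(t_k) \cup \Delta S_a^{\mathrm{valid}}$ rather than $\mathcal{O} \vdash \Delta S_a^{\mathrm{valid}}$ in isolation, is a genuine strengthening; the paper's Definition of Validation ($\mathcal{O} \vdash P_a$) is ambiguous on this point, and the stronger reading is the one the framework needs. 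Your remark about serialization of concurrent integrations is likewise a real issue the paper does not discuss, though Assumption S2 (reliable scoped delivery) is about refresh propagation rather than commit ordering, so you would want to appeal instead to Assumption S3 (deterministic merge in commit order) or simply posit a total order on commits. In short: same skeleton, but your version closes two holes the paper's sketch leaves open.
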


\begin{proof}[Proof Sketch.]
Since $\mathcal{M}(0)$ is $\mathcal{O}$-consistent and each integrated update $\Delta S_a^{\text{valid}}$ satisfies $\mathcal{O} \vdash \Delta S_a^{\text{valid}}$, consistency is preserved inductively: $\mathcal{O} \vdash \mathcal{M}(t)$ for all $t$.
\end{proof}

\begin{corollary}[Slice Validity Preservation]
If $\mathcal{M}(t')$ is consistent with $\mathcal{O}$ for all $t' \leq t$, then every agent’s semantic slice $S_a(t)$ is consistent with $\mathcal{O}$.

Formally:
\begin{equation}
\forall a \in \mathcal{A}, \quad \exists t' \leq t \quad \text{such that} \quad M_a(t) \subseteq \pi_{O_a}(\mathcal{M}(t')) \quad \text{and} \quad \mathcal{O} \vdash S_a(t)
\end{equation}
\end{corollary}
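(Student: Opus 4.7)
The plan is to chain three facts: the definition of the semantic slice, the Monotonic Semantic Coherence theorem, and the monotonicity of ontology entailment under restriction. The target conclusion has two conjuncts, and the first, the existence of $t' \leq t$ with $M_a(t) \subseteq \pi_{O_a}(\mathcal{M}(t'))$, is essentially true by construction: it is the very clause in the definition of $S_a(t) = (O_a, M_a(t))$ from Section~\ref{subsec:ontology-slices} that allows local memory to lag the global projection. So I would begin by simply invoking that definition to fix a witness $t' \leq t$.

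Next, I would discharge the consistency half. By the hypothesis of the corollary (equivalently, by Theorem~\ref{thm:semantic-coherence} applied to the witness), $\mathcal{O} \vdash \mathcal{M}(t')$. I would then argue that the memory projection $\pi_{O_a}$ preserves $\mathcal{O}$-consistency: since $\pi_{O_a}(\mathcal{M}(t'))$ is obtained by retaining exactly those statements of $\mathcal{M}(t')$ whose root entity lies in $O_a \subseteq \mathcal{O}$, it is a subset of an $\mathcal{O}$-consistent set. Because ontology entailment here is a set-theoretic satisfaction relation (each retained statement is already $\mathcal{O}$-compliant, and no new statements are added), a subset cannot introduce a violation. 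Hence $\mathcal{O} \vdash \pi_{O_a}(\mathcal{M}(t'))$, and since $M_a(t) \subseteq \pi_{O_a}(\mathcal{M}(t'))$, the same reasoning yields $\mathcal{O} \vdash M_a(t)$. Combined with $O_a \subseteq \mathcal{O}$, this gives $\mathcal{O} \vdash S_a(t)$.

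The one genuinely delicate step, and the only place I expect any resistance, is the claim that $\vdash$ is preserved under taking subsets of the memory. For the \emph{purely schematic} constraints encoded by $\mathcal{O}$ (typing, domain/range, inheritance), this is immediate: dropping statements cannot cause a retained statement to become schema-violating. However, if one reads $\mathcal{O} \vdash \mathcal{M}$ as requiring satisfaction of existential or closure axioms, then a projection could in principle remove statements that were needed as witnesses. I would therefore explicitly invoke the reading of validation from Section~\ref{subsec:update-proposals}, where $\mathcal{O} \vdash P_a$ is a local, per-update admissibility check, so that $\mathcal{O} \vdash \mathcal{M}(t)$ likewise reduces to per-statement admissibility and is monotone under restriction to $O_a$. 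With that reading in hand, the subset step is trivial and the corollary follows.

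Finally, I would package the argument as a single short inductive-style display: pick $t'$ from the slice definition, note $\mathcal{O} \vdash \mathcal{M}(t')$ by Theorem~\ref{thm:semantic-coherence}, propagate consistency along $\mathcal{M}(t') \supseteq \pi_{O_a}(\mathcal{M}(t')) \supseteq M_a(t)$, and conclude $\mathcal{O} \vdash S_a(t)$. No induction on $t$ is needed, since coherence of $\mathcal{M}$ has already done the temporal work; the corollary is essentially a projection lemma riding on Theorem~\ref{thm:semantic-coherence}.
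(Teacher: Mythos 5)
Your proof follows the same route as the paper's: fix the witness $t' \le t$ from the slice definition, apply Theorem~\ref{thm:semantic-coherence} to get $\mathcal{O} \vdash \mathcal{M}(t')$, and propagate consistency through $\pi_{O_a}(\mathcal{M}(t')) \supseteq M_a(t)$ using the fact that projection preserves $\mathcal{O}$-consistency. Your explicit caveat about existential or closure axioms and the per-statement reading of $\vdash$ is a more careful justification of the subset step than the paper's one-line appeal to "projection preserves $\mathcal{O}$-consistency," but it is the same argument.
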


\begin{proof}[Proof Sketch]
By Theorem~\ref{thm:semantic-coherence}, $\mathcal{M}(t')$ is $\mathcal{O}$-consistent for all $t' \leq t$. Since $M_a(t) \subseteq \pi_{O_a}(\mathcal{M}(t'))$ and projection preserves $\mathcal{O}$-consistency over $O_a \subseteq \mathcal{O}$, we have $\mathcal{O} \vdash S_a(t)$. 
\end{proof}

\subsubsection{Fault Tolerance and Containment}\label{subsubsec:fault-tolerance}

We next examine how the system behaves under partial failure, showing that agent-level faults are contained locally and do not compromise global consistency.

\begin{proposition}[Localized Failure Containment]\label{prop:local-failure}
If an agent $a \in \mathcal{A}$ fails at time $t_f$, only its semantic slice $S_a(t_f)$ and pending updates are lost. The global memory $\mathcal{M}(t)$ and the slices of all other agents remain consistent with $\mathcal{O}$.

Formally:
\begin{equation}
\forall b \in \mathcal{A} \setminus \{a\}, \quad \mathcal{O} \vdash S_b(t) \quad \text{for all } t \geq t_f
\end{equation}
\end{proposition}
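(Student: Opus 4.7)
The plan is to reduce the claim to two observations: fail-stop semantics preclude the injection of invalid content into $\mathcal{M}(t)$, and every surviving agent continues to apply the same ontology-scoped validation discipline. Under this reduction, Theorem~\ref{thm:semantic-coherence} already handles global consistency after $t_f$, and the Slice Validity Preservation corollary transfers this guarantee to each surviving $S_b(t)$. Accordingly, I would first pin down the failure model: at time $t_f$, agent $a$ halts, its slice $S_a(t_f)$ is discarded, and any proposals that had not yet passed the ontology gate are lost. The key observation is that only updates satisfying $\mathcal{O}\vdash \Delta S_a^{\mathrm{valid}}$ are ever integrated, so every contribution of $a$ already present in $\mathcal{M}(t)$ for $t\le t_f$ was ontology-compliant at the moment of commit; losing pending proposals merely removes candidate statements from the pipeline, which cannot introduce inconsistency since $\mathcal{O}$-consistency is preserved under removal.

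Next, I would run the argument forward on the post-failure trajectory. For $t\ge t_f$ the set of active update sources is $\mathcal{A}\setminus\{a\}$, and every committed update still satisfies ontology validation, so Theorem~\ref{thm:semantic-coherence} continues to yield $\mathcal{O}\vdash\mathcal{M}(t)$. For each surviving $b$, refresh propagation and local revalidation proceed unaffected by $a$'s absence, so $M_b(t)\subseteq \pi_{O_b}(\mathcal{M}(t'))$ still holds for some $t'\le t$; applying the Slice Validity Preservation corollary then delivers $\mathcal{O}\vdash S_b(t)$ for all $t\ge t_f$, which is the stated conclusion. No new induction is needed beyond the one already internal to Theorem~\ref{thm:semantic-coherence}.

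The main obstacle, and the step most likely to need strengthening, is the failure model itself. Under Byzantine-style failure (partial writes, bypassed validation, or corrupted refresh notifications) $\mathcal{M}(t)$ could acquire an unsanctioned statement and both halves of the argument collapse. The clean resolution is to require crash-stop semantics together with atomicity of the commit-and-validate step as an architectural assumption inherited from Section~\ref{sec:semantic-fusion}; once that is explicit, the remainder is a direct application of the coherence and slice-validity results. A secondary subtlety is that pending updates from $a$ may already have propagated partial refresh notifications to some $b$: since $b$ will simply fail to retrieve a committed payload and take no action, this reduces to a delivery gap rather than an integrity violation, and falls outside the consistency claim being proved here.
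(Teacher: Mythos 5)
Your proposal is correct and follows essentially the same route as the paper's own (much terser) proof sketch: global memory contains only validated updates, so Theorem~\ref{thm:semantic-coherence} continues to hold after $t_f$, and Slice Validity Preservation transfers consistency to each surviving $S_b(t)$. Your explicit flagging of the crash-stop (non-Byzantine) failure model and the atomicity of commit-and-validate is a useful clarification the paper leaves implicit, but it does not change the argument.
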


\begin{proof}[Proof Sketch]
Since $\mathcal{M}(t)$ is composed of validated updates, failure of agent $a$ affects only its local slice $S_a(t)$ and any pending proposals. The global memory $\mathcal{M}(t)$ and unaffected agents' slices remain $\mathcal{O}$-consistent.
\end{proof}

\subsubsection{Convergence and Synchronization}\label{subsubsec:convergence}

We now show that, in the absence of new updates and under reliable communication, agents eventually synchronize their semantic slices with the stable global memory.

\begin{corollary}[Quiescence under No New Updates]\label{cor:refresh-completes}
Given lemma~\ref{lem:delivery} and assuming no new updates occur after time $t_0$, all refresh notifications and necessary agent-to-agent retrievals complete within finite time.

Formally:
\begin{equation}
\exists T > t_0 \quad \text{such that} \quad \forall a \in \mathcal{A}, \quad 
\text{all } \tau(\Delta S_a) \text{ and agent\text{-}to\text{-}agent retrievals complete by } T
\end{equation}
\end{corollary}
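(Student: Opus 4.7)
The plan is to reduce the corollary to a finite enumeration of outstanding delivery obligations and then invoke Lemma~\ref{lem:delivery} termwise. After $t_0$, by hypothesis, no agent produces any new structured update proposal, so the only outstanding work in the system consists of refresh notifications and agent-to-agent retrievals triggered by updates that were integrated into $\mathcal{M}$ at some time $t \leq t_0$. The strategy is to collect this finite backlog, bound the completion time of each item individually, and take the maximum.

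First I would fix the set
\[
\Delta^{\star} = \{\Delta S_a^{\text{valid}} : a \in \mathcal{A},\ \Delta S_a^{\text{valid}} \text{ was integrated into } \mathcal{M} \text{ at some } t \leq t_0\}.
\]
This set is finite: $\mathcal{A}$ is finite and integration events are discrete, so only finitely many have occurred in $[0, t_0]$. For each $\Delta S_a^{\text{valid}} \in \Delta^{\star}$ and each agent $b$ with $O_b \cap \mathrm{dom}(\Delta S_a^{\text{valid}}) \neq \varnothing$, Lemma~\ref{lem:delivery} produces a time $t_{a,b} < \infty$ at which $\Delta S_a^{\text{valid}} \subseteq M_b(t_{a,b})$, i.e., both the refresh notification has been delivered and the retrieval-plus-revalidation has completed. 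The index set $\{(a,b) : \Delta S_a^{\text{valid}} \in \Delta^{\star},\ O_b \cap \mathrm{dom}(\Delta S_a^{\text{valid}}) \neq \varnothing\}$ is a subset of $\Delta^{\star} \times \mathcal{A}$, hence finite, and so $T := \max_{(a,b)} t_{a,b}$ is finite and witnesses the claim.

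The main subtlety is closure: I must argue that processing items in $\Delta^{\star}$ does not itself induce new refresh notifications after $t_0$. By Definition~\ref{def:refresh}, a notification $\tau(\Delta S_a)$ is emitted only when a freshly validated proposal of agent $a$ is integrated into $\mathcal{M}$; revalidation and integration performed by a receiving agent $b$ update only $M_b$, not $\mathcal{M}$, so they generate no secondary notifications. Combined with the no-new-updates hypothesis after $t_0$, this ensures $\Delta^{\star}$ is a closed backlog that shrinks monotonically, so the finite maximum $T$ is genuinely a quiescence time for both the notification channel and the retrieval queue. The only obstacle worth flagging is verifying this non-cascading property from the definitions in Section~\ref{subsec:refresh-propagation}; once that is in hand, the argument is a straightforward finiteness-of-maxima step over the guarantees of Lemma~\ref{lem:delivery}.
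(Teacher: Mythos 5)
Your proof is correct and follows essentially the same route as the paper's: collect the finite backlog of updates committed by $t_0$, apply Lemma~\ref{lem:delivery} to each (update, interested-agent) pair, and take the maximum of the finitely many completion times. Your additional closure argument, that retrieval and revalidation by a receiving agent modifies only $M_b$ and therefore emits no secondary notifications, is a worthwhile explicit justification of a step the paper's sketch leaves implicit ("only finitely many refresh notifications are issued"), but it does not change the structure of the argument.
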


\begin{proof}[Proof Sketch]
Since no new updates occur after $t_0$, only finitely many refresh notifications are issued. By Lemma~\ref{lem:delivery}, each relevant notification and retrieval completes within finite time, yielding global communication quiescence by some $T > t_0$. 
\end{proof}

\begin{theorem}[Slice Convergence]\label{thm:slice-consistency}
Assume Lemma~\ref{lem:delivery} and that no new updates are committed after time $t_0$.
Then for every agent $a \in A$, there exists a finite time $T' \geq t_0$ such that
\begin{equation}
M_a(T') = \pi_{O_a}\bigl(\mathcal{M}(t_0)\bigr)
\quad\text{and}\quad
M_a(t) = M_a(T') \quad \text{for all } t \geq T'.
\end{equation}
\end{theorem}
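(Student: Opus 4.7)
The plan is to derive stabilization as a two-step argument: first pin down a finite time $T'$ by which all relevant refresh activity has ceased, using Lemma~\ref{lem:delivery} together with Corollary~\ref{cor:refresh-completes}; then establish the equality $M_a(T') = \pi_{O_a}(\mathcal{M}(t_0))$ by double inclusion and argue stability for $t \ge T'$ from the absence of further refresh triggers. A preliminary observation is that, since integration is the only mechanism by which $\mathcal{M}$ changes and no updates are committed after $t_0$, the global memory is constant for $t \ge t_0$; hence $\pi_{O_a}(\mathcal{M}(t)) = \pi_{O_a}(\mathcal{M}(t_0))$ for all such $t$, and this projection is the stable target.

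For the forward direction, I would consider the finite set $\mathcal{D}$ of committed updates $\Delta S_b$ with $\mathrm{dom}(\Delta S_b) \cap O_a \ne \varnothing$. Finiteness follows because only finitely many updates occur before $t_0$ (otherwise the quiescence corollary would be vacuous). For each $\Delta S_b \in \mathcal{D}$, Lemma~\ref{lem:delivery} produces a finite $t_b \ge t_0$ with $\Delta S_b \subseteq M_a(t_b)$. Taking $T'$ to be the maximum of these $t_b$ and the quiescence bound $T$ from Corollary~\ref{cor:refresh-completes}, every statement $s \in \pi_{O_a}(\mathcal{M}(t_0))$ has $\mathrm{dom}(s) \in O_a$ and was introduced by some $\Delta S_b \in \mathcal{D}$, so $s \in M_a(T')$, giving $\pi_{O_a}(\mathcal{M}(t_0)) \subseteq M_a(T')$. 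The reverse inclusion follows from the semantic slice definition: $M_a(T') \subseteq \pi_{O_a}(\mathcal{M}(t''))$ for some $t'' \le T'$, and since $\mathcal{M}$ is constant on $[t_0, T']$ and only grew monotonically before $t_0$, we get $\pi_{O_a}(\mathcal{M}(t'')) \subseteq \pi_{O_a}(\mathcal{M}(t_0))$. Combining the two inclusions yields the required equality.

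Stability for $t \ge T'$ follows because no new updates are committed after $t_0$, so by $T'$ all refresh notifications have been delivered, retrieved, and integrated (Corollary~\ref{cor:refresh-completes}); thus no new content can enter $M_a$, and validated integration never removes content, so $M_a(t) = M_a(T')$. The main obstacle I anticipate is the reverse inclusion: one must exclude the possibility that $M_a$ contains material corresponding to stale or transient projections from times $t'' < t_0$ that do not survive in $\mathcal{M}(t_0)$. Because validated integration is additive and $\mathcal{M}$ is non-decreasing up to $t_0$, any such $\pi_{O_a}(\mathcal{M}(t''))$ with $t'' \le t_0$ sits inside $\pi_{O_a}(\mathcal{M}(t_0))$, closing the gap; but making this monotonicity explicit — and confirming that the semantic slice definition's existential quantifier over $t''$ does not let $M_a$ escape the target projection — is the delicate bookkeeping step of the argument.
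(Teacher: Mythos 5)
Your proposal is correct and follows essentially the same route as the paper's own proof sketch: global memory stabilizes at $\mathcal{M}(t_0)$, Lemma~\ref{lem:delivery} and Corollary~\ref{cor:refresh-completes} bound the completion of all slice-relevant refreshes by a finite $T'$, and monotone integration gives stability thereafter. Your explicit double-inclusion argument (in particular, using monotonicity of $\mathcal{M}$ to rule out stale content in the reverse direction) is a detail the paper's sketch leaves implicit, but it does not change the underlying argument.
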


\begin{proof}[Proof Sketch.]
Since no new updates occur after $t_0$, the global memory stabilizes to $\mathcal{M}(t_0)$.  
By Lemma~\ref{lem:delivery} and Corollary~\ref{cor:refresh-completes}, all relevant refresh notifications and retrievals complete within finite time.  
Assuming agents integrate received updates without loss or reordering,  
each local memory $M_a(t)$ converges to $\pi_{O_a}\bigl(\mathcal{M}(t_0)\bigr)$ by some finite time $T'$.  
Thereafter, no further changes occur.
\end{proof}

\subsubsection{Ontology-Scoped Causal Isolation}\label{subsubsec:causal- isolation}

Semantic Fusion guarantees that agents are unaffected by updates involving entities outside their scoped ontology. This yields strong modularity and reduces unnecessary memory turnover.

\begin{theorem}[Ontology-Scoped Causal Isolation]\label{thm:causal-isolation}
If \( \text{dom}(s) \notin O_a \) for all \( s \in \Delta S_b \), then:
\begin{equation}
\tau(\Delta S_b) \cap O_a = \varnothing \quad \Rightarrow \quad M_a(t+1) = M_a(t)
\end{equation}
\end{theorem}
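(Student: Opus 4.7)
The plan is to unfold the definitions of the refresh notification and the local memory update rule, and then argue that the causal path from $\Delta S_b$ to $M_a$ is severed exactly when ontology scopes are disjoint. Concretely, I would fix the time step in which $\Delta S_b$ is integrated into $\mathcal{M}$, and treat $M_a(t+1)$ as the result of agent $a$'s reaction (or non-reaction) to the refresh event generated by that integration.

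First I would expand $\tau(\Delta S_b)$ using Definition~\ref{def:refresh}: $\tau(\Delta S_b) = \mathcal{E}$, the set of entities modified by $\Delta S_b$. By construction each modified entity is the root of some statement $s \in \Delta S_b$, i.e.\ $\mathcal{E} \subseteq \{\mathrm{dom}(s) \mid s \in \Delta S_b\}$. The hypothesis $\mathrm{dom}(s) \notin O_a$ for all $s \in \Delta S_b$ therefore yields $\mathcal{E} \cap O_a = \varnothing$, which is precisely the antecedent $\tau(\Delta S_b) \cap O_a = \varnothing$ of the implication. This step is essentially bookkeeping over the definitions.

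Next I would invoke the refresh protocol from Section~\ref{subsec:refresh-propagation}: an agent $b'$ retrieves and revalidates $\Delta S_b$ only if $O_{b'} \cap \mathcal{E} \neq \varnothing$. Applying this with $b' = a$ and using $\mathcal{E} \cap O_a = \varnothing$, agent $a$ neither retrieves nor integrates $\Delta S_b$. Since $M_a$ evolves only through integration of validated updates whose scope touches $O_a$ (the agent performs no write to its shared-facing memory in this step and, by the ontology-slice constraint, could not write statements with $\mathrm{dom}(s) \notin O_a$ even if it tried), the local memory is unchanged: $M_a(t+1) = M_a(t)$.

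The main obstacle is the implicit scoping of the transition from $t$ to $t+1$: the theorem is really saying that the single step induced by $\Delta S_b$ leaves $M_a$ fixed, not that $M_a$ never changes globally. I would therefore state explicitly that we are considering the transition attributable to the integration of $\Delta S_b$, so that no other validated update, internal proposal by $a$, or independent refresh is implicitly being processed. With that framing, the theorem reduces to the contrapositive of the retrieval rule in Definition~\ref{def:refresh}, and the argument closes cleanly. A brief remark would note that the result composes: if $a$ is isolated from every update at a step, then $M_a$ is unchanged, which is exactly the modularity guarantee exercised later by Proposition~\ref{prop:local-failure} and by the bisimulation construction.
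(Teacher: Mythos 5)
Your proposal is correct and takes essentially the same route as the paper's (one-line) proof sketch: disjoint entity scope means agent $a$ receives no refresh notification for $\Delta S_b$, so its local memory is unchanged. Your extra care in unfolding $\tau$ via Definition~\ref{def:refresh} and in scoping the $t \to t+1$ transition to the single step attributable to $\Delta S_b$ is a sensible elaboration of the same argument, not a different one.
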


\begin{proof}[Proof Sketch]
If \(\mathrm{dom}(\Delta S_b) \cap O_a = \emptyset\), no refresh notification is received by agent \(a\), so \(M_a(t+1) = M_a(t)\).
\end{proof}

\subsubsection{Communication Complexity}\label{subsubsec:comms-complexity}

Finally, we analyze the communication overhead associated with update propagation, showing that the system achieves bounded complexity relative to the number of impacted agents.

\begin{proposition}[Bounded Communication Complexity]\label{prop:comm-bound}
Assume that refresh notifications $\tau(\Delta S_a)$ are disseminated with negligible overhead compared to agent-to-agent retrievals. Let $d$ be the number of agents whose semantic slices intersect the updated entities. Then each validated update incurs $O(d)$ communication overhead, independent of the total number of agents.

Formally:
\begin{equation}
\text{Communication Cost} = O(d)
\end{equation}
\end{proposition}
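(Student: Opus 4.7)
The plan is to decompose the communication cost of a single validated update $\Delta S_a$ into two additive components: (i) the dissemination of the refresh notification $\tau(\Delta S_a)$, which by hypothesis contributes negligibly, and (ii) the agent-to-agent retrievals triggered by agents whose slices intersect the modified entities. The argument then reduces to bounding the number and per-agent cost of these retrievals and summing.

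For component (ii), I would appeal directly to Theorem~\ref{thm:causal-isolation}: if an agent $b$ satisfies $\mathrm{dom}(s) \notin O_b$ for every $s \in \Delta S_a$, then $b$ receives no effective refresh and performs no retrieval, leaving $M_b$ unchanged. Consequently, the only agents that initiate retrieval are those $b$ with $O_b \cap \mathrm{dom}(\Delta S_a) \ne \varnothing$; by definition this set has size $d$, independent of $|\mathcal{A}|$.

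Next, I would argue that each of these $d$ participating agents incurs $O(1)$ communication per update. The selective refresh mechanism (Definition~\ref{def:refresh}) triggers at most one pull per affected agent per notification, and the retrieved payload $\Delta S_a$ is a single structured proposal whose size does not depend on the total agent count. Summing over the $d$ affected agents yields $O(d)$, and adding the negligible notification cost preserves the bound. This gives the stated complexity, uniform in $|\mathcal{A}|$.

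The main subtlety will be justifying the constant per-agent retrieval cost cleanly: one must argue that local revalidation and integration on agent $b$ do not induce cascading further retrievals from peers, which follows from the fact that slice-local validation (Section~\ref{subsec:update-proposals}) operates only over $O_b$ and does not itself emit new notifications unless $b$ subsequently commits its own update. One must also fix that the size of $\Delta S_a$ is treated as a constant with respect to $|\mathcal{A}|$, so that message-size contributions do not smuggle in a hidden dependence on the system scale. Once these accounting conventions are made explicit, the $O(d)$ summation is immediate and matches the $\Theta(d)$ entry in Table~\ref{tab:comparison}.
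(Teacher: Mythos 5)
Your proof is correct and follows essentially the same route as the paper's: the refresh notification is discounted as negligible by hypothesis, only the $d$ agents with slice overlap perform retrievals, and each retrieval contributes $O(1)$, giving $O(d)$ total. Your added care about non-cascading retrievals and constant payload size makes the accounting slightly more explicit than the paper's sketch, but the underlying argument is identical.
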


\begin{proof}[Proof Sketch]
Each validated update \(\Delta S_a^{\text{valid}}\) triggers lightweight refresh notifications identifying the modified entities. Only agents with slice overlap (\(O_b \cap \mathrm{dom}(\Delta S_a) \neq \varnothing\)) perform retrievals. Since retrievals dominate communication cost, and occur for at most \(d\) agents, total communication is \(O(d)\) per update. (e.g., $O(n)$ broadcast still yields $O(d)$ dominant cost under scoped pulls).
\end{proof}

\begin{theorem}[Lower Bound on Communication under Scoped Validation]\label{thm:comm_lower_bound}
Let an update $\delta \in\Delta S_a$ affect a set of semantic entities $\mathcal{E} \subseteq \mathcal{U}$, and let $d$ be the number of agents whose semantic slices intersect with $\mathcal{E}$. Then in the worst case, any protocol that preserves ontology-scoped validity must incur at least $\Omega(d)$ communication events.
\end{theorem}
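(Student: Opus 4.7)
The plan is to establish the lower bound via an adversarial indistinguishability argument, counting a \emph{communication event} as a single receipt at an agent of any information that depends on the update $\delta$ (a refresh notification, a pulled update payload, or a derived message). I would fix $\delta \in \Delta S_a$, its entity scope $\mathcal{E}$, and the set $B \subseteq A$ of $d$ agents with $O_b \cap \mathcal{E} \neq \varnothing$. The goal is to show that any protocol $\Pi$ that preserves ontology-scoped validity must deliver at least one such event to each $b \in B$, giving $|B| = d$ distinct events.

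First I would set up two executions $W_1$ and $W_0$ that are identical except that in $W_1$ agent $a$ proposes and validates $\delta$, while in $W_0$ no such proposal is ever made. Suppose, for contradiction, that $\Pi$ attains fewer than $d$ communication events in $W_1$. By pigeonhole, there exists $b \in B$ whose entire local execution receives no event depending on $\delta$, so the local transcript of $b$ in $W_1$ coincides step-for-step with its transcript in $W_0$, and consequently $M_b^{W_1}(t) = M_b^{W_0}(t)$ for all $t$. The second step is to invoke ontology-scoped validity: since $O_b \cap \mathrm{dom}(\delta) \neq \varnothing$, Lemma~\ref{lem:delivery} applied in $W_1$ forces $\delta \subseteq M_b(t')$ at some finite $t'$, while in $W_0$ the projection $\pi_{O_b}(\mathcal{M})$ never contains $\delta$. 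This contradicts indistinguishability, so $b$ must have received at least one event, and summing over $B$ yields the $\Omega(d)$ bound.

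To make the argument tight I would also need to rule out the loophole where $b$ learns of $\delta$ indirectly via some intermediate agent $c \notin B$ without cost. Under the scoped-delivery semantics of Definition~\ref{def:refresh}, any message to $c$ that encodes content about entities in $\mathcal{E} \setminus O_c$ violates the scoping discipline; thus $c$ cannot legitimately forward $\delta$-dependent content to $b$ without itself having received (and validated) an event that crosses $b$'s ontology boundary, which either charges an additional event to $b$ or relays the same event, preserving the per-recipient count.

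The main obstacle is the choice of communication model. Under broadcast with free fan-out, a single transmission could in principle reach many recipients, so I would pre-declare that the cost metric counts deliveries (receptions) rather than transmissions, consistent with Proposition~\ref{prop:comm-bound}; this is the only model compatible with scoped validation, since any transmission that reaches an agent $b$ with $O_b \cap \mathcal{E} = \varnothing$ conveys out-of-slice content and thus fails the scoping premise of the theorem. With that model fixed, the argument above is tight and matches the $O(d)$ upper bound from Proposition~\ref{prop:comm-bound}, establishing $\Theta(d)$ communication per validated update.
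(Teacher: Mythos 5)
Your proposal is correct and follows essentially the same route as the paper's proof: each of the $d$ agents with $O_b \cap \mathcal{E} \neq \varnothing$ must observe $\delta$, since otherwise its slice fails to integrate a required scoped update, contradicting Lemma~\ref{lem:delivery}. Your two-world indistinguishability argument and the explicit per-recipient cost model are simply rigorous formalizations of what the paper's sketch asserts informally ("if fewer than $d$ agents receive or retrieve $\delta$, some slice will fail to integrate a valid scoped update"), so the approaches coincide.
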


\begin{proof}[Proof Sketch]
Let \(\mathcal{E}\) be a set of entities such that each of the \(d\) agents has a nonempty intersection with \(\mathcal{E}\).  
To preserve slice validity, each agent with \(O_b \cap \mathcal{E} \neq \varnothing\) must observe the update \(\delta \in \Delta S_a\).  
If fewer than \(d\) agents receive or retrieve \(\delta\), some slice will fail to integrate a valid scoped update, violating Lemma~\ref{lem:delivery}.  
Thus, \(\Omega(d)\) communication is necessary.
\end{proof}

\begin{corollary}[Tightness of Upper Bound]\label{cor:bound-tightness}
The $O(d)$ communication bound established in Proposition~\ref{prop:comm-bound} is asymptotically tight. The system achieves the minimum message complexity consistent with ontology-scoped semantic coherence.
\end{corollary}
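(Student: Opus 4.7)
The plan is to derive the corollary by sandwiching the communication cost between matching upper and lower bounds that have already been established in the section. Proposition~\ref{prop:comm-bound} provides an $O(d)$ upper bound on the per-update communication cost of the protocol, counting retrievals by agents whose slices intersect the updated entities (refresh notifications being treated as negligible). Theorem~\ref{thm:comm_lower_bound} provides a matching $\Omega(d)$ lower bound that applies to \emph{any} protocol that preserves ontology-scoped validity in the sense of Lemma~\ref{lem:delivery}. Combining these two yields $\Theta(d)$ per validated update, which is the definition of asymptotic tightness.

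My proof would proceed in three short steps. First, I would fix a validated update $\Delta S_a^{\text{valid}}$ affecting an entity set $\mathcal{E}$, and let $d = |\{ b \in \mathcal{A} : O_b \cap \mathcal{E} \ne \varnothing \}|$. Second, I would invoke Proposition~\ref{prop:comm-bound} to conclude that the reference protocol transmits at most $c_1 d$ communication events for some constant $c_1$. Third, I would invoke Theorem~\ref{thm:comm_lower_bound} to conclude that any protocol preserving scoped semantic coherence must transmit at least $c_2 d$ events for some constant $c_2$. Since the reference protocol belongs to the class covered by the lower bound (it satisfies Lemma~\ref{lem:delivery} by construction), its communication cost is simultaneously $O(d)$ and $\Omega(d)$, hence $\Theta(d)$.

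For the second clause of the statement, I would note that optimality is measured within the class of protocols that maintain ontology-scoped validity, since any protocol outside this class could trivially transmit zero messages and violate Lemma~\ref{lem:delivery}. Because the lower bound applies uniformly to this entire class, no admissible alternative protocol can achieve asymptotically fewer communication events than the reference scheme. This yields the minimality claim: the reference protocol attains the optimal message complexity up to constant factors.

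The main obstacle, though minor, is ensuring that the cost model used in the upper bound is compatible with the event-counting model used in the lower bound. Proposition~\ref{prop:comm-bound} absorbs refresh notifications into a negligible-overhead term and charges only retrievals, while Theorem~\ref{thm:comm_lower_bound} counts any communication event required to convey $\delta$ to an intersecting slice. I would resolve this by observing that each retrieval in the upper bound corresponds to at least one such event in the lower bound, so the inequality chain $c_2 d \le \text{cost} \le c_1 d$ holds under a single consistent accounting, and the $\Theta(d)$ conclusion follows without further analysis.
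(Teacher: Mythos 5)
Your proposal matches the paper's proof: both combine the $O(d)$ upper bound from Proposition~\ref{prop:comm-bound} with the $\Omega(d)$ lower bound from Theorem~\ref{thm:comm_lower_bound} to conclude $\Theta(d)$ and hence tightness. Your additional care in reconciling the two cost-accounting models is a reasonable elaboration but does not change the argument.
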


\begin{proof}[Proof Sketch]
Proposition~\ref{prop:comm-bound} shows $O(d)$ upper bound per update, and Theorem~\ref{thm:comm_lower_bound} shows matching $\Omega(d)$ lower bound.  
Thus, message complexity is $\Theta(d)$, and the bound is asymptotically tight.
\end{proof}

\begin{remark}
Each relevant agent must observe the update to validate or apply it, so $\Theta(d)$ communication is unavoidable in the worst case. Implementations can achieve this bound using lightweight, non-duplicating notifications followed by scoped pulls.
\end{remark}

\paragraph{Summary.}
Together, Theorems~\ref{thm:semantic-coherence} through~\ref{thm:slice-consistency} establish that under reliable delivery and ontology-conformant updates, each agent’s slice remains semantically coherent, aligned with relevant changes, and eventually converges to a consistent view of the global memory, without requiring centralized coordination or full synchronization.

Nonetheless, decentralization does not imply the elimination of all information flow, but rather the removal of centralized coordination and direct inter-agent negotiation. In Semantic Fusion, semantic coherence is preserved through structured, ontology-scoped refresh notifications, which incur only $\Theta(d)$ communication overhead---the minimum necessary for correctness (Theorem~\ref{cor:bound-tightness}). Without such bounded, scoped propagation, no decentralized system could maintain coherent shared semantic memory.

\subsection{Behavioral Equivalence}\label{subsec:behavioral-equivalence}
We now formalize the relationship between an agent’s local memory evolution and the evolution of its projected global state over time. To do so, we model both the global memory and each agent’s local memory as labeled transition systems (LTSs), which capture the sequence of update-induced state changes.

\begin{definition}[Projected Global LTS]
Let $\mathcal{G} = (Q, \rightarrow)$ be the labeled transition system over global memory states $\mathcal{M}(t)$, where each transition corresponds to the integration of a validated update $\Delta S_a^{\text{valid}}$. The projection $\mathcal{G}\!\mid_{O_a}$ is the LTS over projected states $\pi_{O_a}(\mathcal{M}(t))$, with transitions restricted to updates involving entities in $O_a$.
\end{definition}

\begin{definition}[Agent Execution LTS]
Let $\mathcal{T}_a = (Q_a, \Rightarrow)$ be the transition system over local memory states $M_a(t)$ for agent $a$, where transitions arise from (i) locally generated validated proposals and (ii) integration of relevant refresh-triggered updates.
\end{definition}

\begin{theorem}[Slice-Global Semantic Bisimulation]\label{thm:slice_bisim}
By Lemma~\ref{lem:delivery} and Assumption S3 (see Appendix~\ref{app:proof-thm10}), there exists a stuttering bisimulation relation $\mathcal{R}_a \subseteq Q_a \times \pi_{O_a}(Q)$ such that:
\begin{equation}
\exists\,t' \le t \;:\;
\bigl(M_a(t),\;\pi_{O_a}\!\bigl(M(t')\bigr)\bigr)\; \in R_a .
\end{equation}

That is, the agent’s local execution is stuttering bisimilar to its scoped projection of the global execution:
\begin{equation}
\mathcal{T}_a \;\dot{\sim}_{\text{st}}\; \mathcal{G}\!\mid_{O_a}.
\end{equation}
\end{theorem}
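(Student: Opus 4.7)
The plan is to exhibit an explicit relation $\mathcal{R}_a$ witnessing the stuttering bisimulation and then verify both transfer conditions by case analysis on the origin of each transition. I would define
\[
\mathcal{R}_a \;=\; \bigl\{(m, p) \in Q_a \times \pi_{O_a}(Q) \;\mid\; \exists\, t, t' \text{ with } t' \le t,\; m = M_a(t) \text{ and } p = \pi_{O_a}(\mathcal{M}(t'))\bigr\},
\]
so that a local memory state is related to any projected global state whose scoped commits the agent has fully absorbed. The initial pair lies in $\mathcal{R}_a$ because $M_a(0) = \pi_{O_a}(\mathcal{M}(0))$, and the witness $t'$ can be advanced monotonically as refreshes are integrated.

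Next I would check the two stuttering transfer conditions. On the $\mathcal{T}_a$ side, every local transition is either (i) the commit of a locally generated validated proposal, which by construction is simultaneously committed to $\mathcal{M}$ and thus induces a matching single step in $\mathcal{G}|_{O_a}$, or (ii) the integration of a refresh-triggered update $\Delta S_b$ that was already committed globally at some earlier time, which simply advances the witness $t'$ and is matched by a stutter-plus-step on the projected side. On the $\mathcal{G}|_{O_a}$ side, any transition is induced by a validated update whose domain intersects $O_a$; by Lemma~\ref{lem:delivery} a refresh notification reaches $a$ in finite time, so after at most finitely many local stutter steps, each preserving $\mathcal{R}_a$ with the witness $t'$ unchanged, the agent integrates the update and matches the transition. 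Transitions of $\mathcal{G}$ whose domain is disjoint from $O_a$ collapse under $\pi_{O_a}$, so they impose no obligation, and Theorem~\ref{thm:causal-isolation} confirms that $M_a$ does not move in response.

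The main obstacle is handling the asynchronous lag between $\mathcal{T}_a$ and $\mathcal{G}|_{O_a}$ without breaking membership in $\mathcal{R}_a$ at intermediate states. Because stuttering bisimulation requires every state traversed during a stuttering segment to be related, the sequence of refreshes the agent absorbs must correspond to a coherent prefix of the scoped global trace. I expect this to depend on Assumption S3 of Appendix~\ref{app:proof-thm10}, which I read as guaranteeing that relevant updates are integrated in an order compatible with the global scoped commit order, either through causal delivery within a slice or through commutativity of non-overlapping updates. Under that assumption each intermediate $M_a(t)$ equals $\pi_{O_a}(\mathcal{M}(t'))$ for some witness $t' \le t$, closing the bisimulation diagram; without such an ordering hypothesis, interleavings of refreshes could produce local states with no corresponding projected global state and break the relation.
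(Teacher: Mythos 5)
Your proposal is correct and follows essentially the same route as the paper's proof in Appendix~\ref{app:proof-thm10}: the same relation pairing $M_a(t)$ with $\pi_{O_a}(\mathcal{M}(t'))$ for $t' \le t$, the same \textsc{prop}/\textsc{sync} case split for the forth direction, and the same appeal to Lemma~\ref{lem:delivery} plus local stutters for the back direction. Your closing discussion of why Assumption~S3 (merge in commit order) is needed to keep intermediate states in the relation is in fact more explicit than the paper's own treatment, which simply invokes S3 without elaboration.
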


\begin{proof}[Proof Sketch]
Initially, $M_a(0) = \pi_{O_a}(\mathcal{M}(0))$.  
Each local transition is either a validated proposal (reflected globally) or integration of a scoped update already committed to $\mathcal{M}(t')$ for some $t' \leq t$.  
By Lemma~\ref{lem:delivery}, all slice-relevant updates eventually propagate, ensuring $M_a(t)$ remains stuttering bisimilar to $\pi_{O_a}(\mathcal{M}(t'))$.  
Updates outside $O_a$ are ignored locally, preserving alignment.  
Forth and back conditions hold; see Appendix~\ref{app:proof-thm10} for full construction.
\end{proof}

\begin{remark}
This result establishes the first formal bridge from local semantic slices to the projected global memory, showing that decentralized agents can reason over their scoped memory without violating global correctness. It implies that the union of local slices, each bisimilar to its respective global projection, collectively captures the behavior of the full system. The global memory need not exist as a central structure: its semantics are preserved in the distributed ensemble of slice-local executions.
It enables modular verification, local model checking, and runtime monitoring with global semantic alignment.
\end{remark}

\begin{corollary}[Temporal Property Preservation]\label{cor:ltl_preservation}
Let $\varphi$ be any temporal logic formula (LTL or CTL*) over entities in $O_a$. If the projected global execution satisfies $\varphi$, i.e., $\mathcal{G}|_{O_a} \models \varphi$, then the agent's local execution satisfies it as well:
\begin{equation}
\mathcal{T}_a \models \varphi.
\end{equation}

That is, semantic properties verified over the global projection are preserved under local execution.\footnote{For CTL$^\ast$, stuttering preservation holds for all formulas
that do not use the next‑state operator $X$; see
\cite{kupferman2000modelcheck} for details.}
\end{corollary}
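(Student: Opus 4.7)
The plan is to derive the corollary as a direct consequence of Theorem~\ref{thm:slice_bisim} together with the classical preservation theorem for stuttering-insensitive temporal logics (Browne--Clarke--Grumberg, see also the reference already cited in the footnote). Since the bisimulation witness $\mathcal{R}_a$ is already in hand, no new operational argument is needed; the work is entirely at the level of connecting the bisimulation to the satisfaction relation $\models$.

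First I would fix the atomic proposition set: because $\varphi$ is a formula over entities in $O_a$, every atomic proposition $p$ occurring in $\varphi$ depends only on the portion of memory lying in the projection $\pi_{O_a}(\mathcal{M}(t))$. Hence, if $(M_a(t), \pi_{O_a}(\mathcal{M}(t'))) \in \mathcal{R}_a$, the two states agree on every atomic proposition of $\varphi$ — this is the \emph{labeling compatibility} side condition for the preservation theorem. Next I would invoke Theorem~\ref{thm:slice_bisim}: the pair of initial states $(M_a(0), \pi_{O_a}(\mathcal{M}(0)))$ lies in $\mathcal{R}_a$, and the stuttering forth/back conditions relate their paths up to finite repetition of equivalent states. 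Finally I would apply the standard result that stuttering bisimilar states satisfy exactly the same LTL$\setminus\!X$ and CTL$^\ast\!\setminus\!X$ formulas, and conclude $\mathcal{T}_a \models \varphi$ from $\mathcal{G}\!\mid_{O_a} \models \varphi$.

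The main obstacle, modest as it is, concerns justifying labeling compatibility cleanly. Semantic Fusion states atomic propositions in ontology-scoped terms, and one must argue that such propositions are genuinely functions of $\pi_{O_a}(\mathcal{M})$ rather than of the full global memory; otherwise the appeal to preservation would be unsound. This requires observing that for any $p$ over entities in $O_a$, the truth value of $p$ at state $q$ is determined by $q \cap O_a$, so $M_a(t)$ and $\pi_{O_a}(\mathcal{M}(t'))$ (being related by $\mathcal{R}_a$, hence ultimately equal up to stuttering) assign it identically. A secondary subtlety is the exclusion of the next-state operator $X$: stuttering steps make $X$ unsound, which is why the statement (and the cited footnote) restrict to the $X$-free fragment; I would flag this explicitly rather than prove it, since it is immediate from the counterexample of a single stuttering step breaking $X\,p$ versus $p$.
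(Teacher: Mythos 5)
Your proposal takes essentially the same route as the paper's own proof: both reduce the corollary to Theorem~\ref{thm:slice_bisim} combined with the classical result that stuttering bisimulation preserves LTL and CTL$^\ast\setminus\{X\}$ satisfaction (the reference cited in the footnote). Your explicit justification of labeling compatibility over $O_a$ and your flagging of the next-state restriction are welcome elaborations of steps the paper's sketch leaves implicit, but they do not change the argument.
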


\begin{proof}[Proof Sketch]
By Lemma~\ref{lem:delivery} and Theorem~\ref{thm:slice_bisim}, projected global transitions over $O_a$ eventually appear in $M_a(t)$.  
Stuttering bisimulation preserves satisfaction of all LTL and CTL*$\setminus$\{X\} formulas~\cite{kupferman2000modelcheck}, so if $\mathcal{G}|_{O_a} \vDash \varphi$, then $\mathcal{T}_a \vDash \varphi$.
\end{proof}

\begin{theorem}[Slice–Local Completeness for Safety Properties]
\label{thm:slice-completeness-safety}
Let $\varphi$ be any safety property expressible in temporal logic over semantic entities in $\mathcal{O}$. Suppose each agent $a \in A$ operates over an ontology slice $O_a \subseteq \mathcal{O}$, and the family $\{O_a\}_{a \in A}$ covers $\mathcal{O}$. If every agent's execution trace satisfies $\varphi$ locally, then the global execution over $\mathcal{M}(t)$ satisfies $\varphi$ as well.

\begin{equation}
\forall a \in A,\ \mathcal{T}_a \models \varphi \quad \Rightarrow \quad \mathcal{G} \models \varphi
\end{equation}
\label{thm:slice-completeness-safety}
\end{theorem}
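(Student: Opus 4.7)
My plan is to prove this contrapositively, reducing a hypothetical global violation of $\varphi$ to a local violation in some agent's execution, which then contradicts the hypothesis. The machinery is already in place: Theorem~\ref{thm:slice_bisim} gives $\mathcal{T}_a \simdot \mathcal{G}\!\mid_{O_a}$, and stuttering bisimulation is symmetric, so the preservation established (in one direction) by Corollary~\ref{cor:ltl_preservation} in fact yields $\mathcal{T}_a \models \varphi \;\Leftrightarrow\; \mathcal{G}\!\mid_{O_a} \models \varphi$ for every stuttering-closed (in particular, $X$-free) safety formula over $O_a$. The hypothesis therefore delivers $\mathcal{G}\!\mid_{O_a} \models \varphi$ for every $a \in A$, and the remaining task is to lift this family of slice-local satisfactions to a global one.

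Next I would invoke the standard bad-prefix characterization of safety: $\mathcal{G} \not\models \varphi$ iff some finite execution prefix $\sigma_{0}\sigma_{1}\cdots\sigma_{k}$ of $\mathcal{G}$ is a bad prefix of $\varphi$, i.e.\ no infinite extension satisfies $\varphi$. Because $\varphi$ is expressed over entities in $\mathcal{O}$, the badness of that prefix depends only on the values taken by a finite set $E \subseteq \mathcal{O}$ of entities appearing in $\varphi$ along the prefix. The coverage hypothesis $\bigcup_{a \in A} O_a = \mathcal{O}$ guarantees every entity of $E$ lies in at least one slice, and Theorem~\ref{thm:semantic-coherence} guarantees the prefix itself is $\mathcal{O}$-consistent, so it is well-posed to project slice-by-slice via $\pi_{O_a}$ and reason about the badness on each projection.

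The main obstacle, and what I would flag explicitly, is moving from per-slice satisfaction back to global satisfaction when $\varphi$'s bad prefixes may mention entities spanning several slices. To handle this I would use a canonical decomposition, induced by the coverage, that rewrites $\varphi$ as a conjunction $\bigwedge_{a \in A} \varphi_a$ in which each conjunct $\varphi_a$ mentions only entities in $O_a$. Any violation of $\varphi$ then falsifies at least one $\varphi_a$; since $\varphi_a$'s atoms are slice-local, that falsification survives projection onto $O_a$ and yields a bad prefix in $\mathcal{G}\!\mid_{O_a}$, hence (by bisimulation) in $\mathcal{T}_a$, contradicting $\mathcal{T}_a \models \varphi$. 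The delicate point is ensuring such a decomposition exists; this is immediate when $\varphi$'s atomic predicates are themselves slice-local, which the coverage hypothesis makes the natural regime for the theorem, and I would either state this restriction explicitly or argue that the coverage induces a refinement of $\varphi$ into slice-local conjuncts whose conjunction is equivalent to $\varphi$ on $\mathcal{O}$-consistent executions.
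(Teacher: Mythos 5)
Your proposal follows the same core route as the paper's proof in Appendix~\ref{app:proof-slice-completeness}: argue by contradiction, extract a finite bad prefix of $\mathcal{G}$ using the standard characterization of safety, use the coverage hypothesis to locate an agent $a^\ast$ whose slice contains the relevant atoms, project the prefix onto $O_{a^\ast}$, and transfer the violation to $\mathcal{T}_{a^\ast}$ via the stuttering bisimulation of Theorem~\ref{thm:slice_bisim} together with stuttering-preservation of ($X$-free) safety. Where you genuinely diverge is in confronting the cross-slice case head-on: the paper's proof silently assumes that $\varphi$ refers only to atoms in a single slice (its appendix restatement even narrows the formula to ``atoms in $O_a$''), whereas you propose decomposing $\varphi$ as $\bigwedge_{a\in A}\varphi_a$ with each $\varphi_a$ slice-local, so that any global violation falsifies some slice-local conjunct. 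That is a real improvement in explicitness, but be careful with the claim that coverage \emph{induces} such a decomposition: it does not in general. A safety formula such as $G(p \lor q)$ with $p \in O_1\setminus O_2$ and $q \in O_2\setminus O_1$ admits no equivalent conjunction of slice-local formulas, and its bad prefixes are not detectable from any single projection. So your fallback---restricting the theorem to formulas whose atoms all lie within a single agent's slice (or that are given a priori as a conjunction of slice-local safety properties)---is not merely a presentational choice but the actual hypothesis under which the argument goes through; this is the same restriction the paper makes implicitly. With that restriction stated, your argument is correct and, if anything, more honest about the scope of the result than the paper's own proof.
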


\begin{proof}[Proof Sketch]
Suppose, for contradiction, that the global execution $\mathcal{G}$ violates the safety property $\varphi$.  
Then there exists a finite bad prefix $\pi = \mathcal{G}(0), \ldots, \mathcal{G}(k)$ that violates $\varphi$.  
Since $\{O_a\}_{a\in A}$ covers $\mathcal{O}$, some agent $a^*$ has a slice $O_{a^*}$ intersecting $\pi$.  
Projecting $\pi$ onto $O_{a^*}$ yields a trace that still violates $\varphi$.  
By Theorem~\ref{thm:slice_bisim}, $\mathcal{T}_{a^*}$ is stuttering bisimilar to this projection, so $\mathcal{T}_{a^*} \nvDash \varphi$, contradicting the premise.  
Thus, $\mathcal{G} \vDash \varphi$. Full proof in \ref{app:proof-slice-completeness}.
\end{proof}

\paragraph{Summary.}
The bisimulation result (Theorem~\ref{thm:slice_bisim}) formally connects local slice memory to projected global behavior. Temporal logic formulas (e.g., CTL*, LTL) that hold on the global trace also hold locally, enabling modular verification and decentralized runtime monitoring. Slice-local Completeness (Theorem~\ref{thm:slice-completeness-safety}) establishes that global safety can be verified slice-by-slice: if every agent satisfies the property over its scoped projection, the entire system satisfies it globally. The full proof appears in Appendix~\ref{app:proof-slice-completeness}.

\subsection{Probabilistic Guarantees}\label{subsec:prob-guarantees}
We now extend the previous results to agents whose update proposals may be sampled from distributions conditioned on their current slices, such as agents driven by probabilistic planners or incorporating LLM-based components. Each agent proposes updates by sampling from a distribution conditioned on its current slice, and transitions are represented using probabilistic labeled transition systems (pLTSs). We show that local and projected global executions remain probabilistically bisimilar under slice-scoped validation and reliable refresh.

We begin by defining a probabilistic generalization of the update process and execution model introduced earlier.

\begin{definition}[Probabilistic Update Generator]
Let $\gamma_a : S_a(t) \rightarrow \mathsf{Dist}(\candP{a})$ be a probabilistic
update generator for agent $a$, where $\candP{a} \subseteq \Delta S_a$ is the agent's set
of candidate proposals. Given the current slice $S_a(t)$, the generator returns a
discrete distribution over $\candP{a}$. Invalid proposals are filtered by ontology
validation, yielding $\Delta S_a^{\text{valid}}$.
\end{definition}

\begin{definition}[Probabilistic LTS]
A probabilistic labeled transition system (pLTS) is a tuple $\mathcal{P} = (Q, \xrightarrow{\alpha}, \mathcal{L}_\mathsf{p})$, where:
$Q$ is a set of states,
$\xrightarrow{\alpha} \subseteq Q \times \mathsf{Dist}(Q)$ is a labeled probabilistic transition relation,
$\mathcal{L}_\mathsf{p} : Q \rightarrow \mathrm{Val}$ is a labeling function assigning semantic truth valuations. Here, Val denotes the set of possible mappings from ontology entities to their current values or truth states.
The transition label $\alpha$ may denote different types of activity, such as update proposals (\textsc{prop}) or refresh-triggered state changes (\textsc{sync}).
\end{definition}

\begin{definition}[Projected Global pLTS]
Let $\mathcal{P}_G = (Q^\mathsf{p}, \xrightarrow{}, \mathcal{L}_G^\mathsf{p})$ denote the probabilistic labeled transition system (pLTS) over global memory states $\mathcal{M}(t)$, where each transition corresponds to the integration of a validated update $\Delta S_b^{\text{valid}}$. Here, $\mathcal{L}_G^\mathsf{p} : Q^\mathsf{p} \to \mathsf{Val}$ is the labeling function that assigns semantic valuations over ontology entities. 

The projection $\pLTS{G}\!\mid_{O_a}
$ is the pLTS over the projected memory $\pi_{O_a}(\mathcal{M}(t))$, with transitions restricted to updates involving entities within the slice $O_a$.
\end{definition}

\begin{definition}[Agent Execution pLTS]
Let $\mathcal{P}_a = (Q_a^\mathsf{p}, \xrightarrow{}, \mathcal{L}_a^\mathsf{p})$ denote the probabilistic labeled transition system (pLTS) over local memory states $\mathcal{M}_a(t)$ for agent $a$. 
Here, $\mathcal{L}_a^\mathsf{p} : Q_a^\mathsf{p} \to \mathsf{Val}$ is the labeling function that assigns semantic valuations over entities in the agent's ontology slice $O_a$.

Transitions in $\mathcal{P}_a$ arise from:
i) Probabilistically sampled update proposals drawn from the agent’s update generator $\gamma_a$ (\textsc{prop}), and
ii) Integration of externally validated updates received through scoped refresh propagation (\textsc{sync}).
\end{definition}

\begin{definition}[Probabilistic Bisimulation]
A relation $\mathcal{R}_a^\mathsf{p} \subseteq Q_a^\mathsf{p} \times Q^\mathsf{p}$ is a probabilistic bisimulation if, whenever $(q_a, q_G) \in \mathcal{R}_a^\mathsf{p}$:
$\mathcal{L}_a^\mathsf{p}(q_a) = \mathcal{L}_G^\mathsf{p}(q_G)$, and
for every transition $q_a \xrightarrow{\alpha} \mu_a$, there exists $q_G \xrightarrow{\alpha} \mu_G$ such that $\mu_a$ and $\mu_G$ admit a coupling with support in $\mathcal{R}_a^\mathsf{p}$, and vice versa \cite{segala1995probbisim}.

\begin{remark}
    In the case of validated refresh updates, the transition is deterministic. We denote this using a Dirac delta distribution \( \delta_{x} \), representing a probability mass concentrated at state \( x \).
\end{remark}
\end{definition}

\begin{theorem}[Probabilistic Slice--Global Bisimulation]\label{thm:prob_slice_bisim}
Under slice-conditioned update generation, Assumption S3 (Appendix~\ref{app:proof-thm10}), and Lemma~\ref{lem:delivery}, there exists a probabilistic bisimulation relation \(\mathcal{R}^p_a \subseteq Q^p_a \times Q^p\) such that:
\begin{equation}
\exists\,t' \le t \;:\;
\bigl(M_a(t),\;\pi_{O_a}\!\bigl(M(t')\bigr)\bigr)\; \in \mathcal{R}^p_a.
\end{equation}

That is, the agent’s probabilistic local execution is bisimilar to the projection of the global probabilistic execution over its ontology slice:
\begin{equation}
\mathcal{P}_a \;\sim_{\text{pbis}}\; \mathcal{P}_G|_{O_a}.
\end{equation}
\end{theorem}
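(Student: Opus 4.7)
The plan is to lift the deterministic construction of Theorem~\ref{thm:slice_bisim} to the probabilistic setting by matching transition distributions through explicit couplings. I would take as the candidate relation $\mathcal{R}^p_a$ the graph of the alignment $(M_a(t),\ \pi_{O_a}(\mathcal{M}(t')))$ with $t' \le t$ guaranteed by Lemma~\ref{lem:delivery}, together with its reflexive-stutter closure needed to accommodate in-flight refreshes. Label agreement $\mathcal{L}_a^p(q_a) = \mathcal{L}_G^p(q_G)$ is immediate, since both labeling functions assign valuations over entities in $O_a$ and the two memories coincide on $O_a$ by construction.

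Next I would verify the coupling requirement case by case on the transition label $\alpha \in \{\textsc{prop}, \textsc{sync}\}$. For a local $\textsc{prop}$ transition, the distribution is $\gamma_a(S_a(t))$ post-composed with ontology validation, and the corresponding transition in $\mathcal{P}_G|_{O_a}$ is driven by the same slice-conditioned generator and the same deterministic validator; the diagonal coupling therefore concentrates all mass on pairs of successor states that agree on $O_a$, placing its support in $\mathcal{R}^p_a$ by an inductive invariant. For a local $\textsc{sync}$ transition arising from another agent's committed update $\Delta S_b^{\mathrm{valid}}$, the local distribution is a Dirac on the uniquely determined successor $M_a(t) \cup \Delta S_b^{\mathrm{valid}}$, and the matching projected global transition is the same Dirac, since by the time the refresh is retrieved the outcome has already been committed in $\mathcal{M}(\cdot)$. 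For the back direction, any transition of $\mathcal{P}_G|_{O_a}$ either has domain disjoint from $O_a$ and is invisible in the projection (no move required, absorbed by stuttering), or intersects $O_a$ and is matched either by the identical $\textsc{prop}$ transition (if produced by $a$ itself) or by a $\textsc{sync}$ transition that arrives in finite time by Lemma~\ref{lem:delivery} and Assumption S3.

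The main obstacle I anticipate is reconciling the randomness of other agents' generators on the global side with the apparently deterministic commitments observed on the local side. In $\mathcal{P}_G|_{O_a}$ a transition attributed to another agent $b$ is formally a distribution over candidate proposals filtered by validation, whereas in $\mathcal{P}_a$ the same update surfaces only after commitment and therefore as a point mass. To resolve this cleanly I would adopt the standard scheduler-based view of probabilistic bisimulation~\cite{segala1995probbisim} and fix a shared scheduler that resolves $b$'s randomness identically in both systems, so that conditioning on $b$'s realized choice converts the global step into a Dirac which couples on the diagonal with the local $\textsc{sync}$ step. The resulting coupling places all mass in $\mathcal{R}^p_a$, yielding the required bisimulation; bookkeeping of in-flight updates is delegated to the stuttering closure, with the full construction deferred to Appendix~\ref{app:proofs}.
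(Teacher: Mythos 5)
Your construction follows essentially the same route as the paper's proof in Appendix~\ref{app:proof-thm12}: the relation is the time-lagged alignment $(M_a(t),\pi_{O_a}(\mathcal{M}(t')))$ with $t'\le t$, label agreement follows from coincidence on $O_a$, and the coupling is built case-by-case on $\alpha\in\{\textsc{prop},\textsc{sync}\}$, with \textsc{prop} matched diagonally at probability $p_u$ (invalid samples paired as self-loops) and \textsc{sync} matched as a Dirac against a global state that has already absorbed the update, while slice-irrelevant global moves are absorbed as stutters. The one genuine point of divergence is your explicit treatment of the other-agents'-randomness problem: in $\mathcal{P}_G|_{O_a}$ a step attributed to agent $b$ is a distribution over $\gamma_b$'s filtered proposals, yet the local side only ever sees the realized commitment as a point mass. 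You resolve this with a shared scheduler in the style of~\cite{segala1995probbisim} that fixes $b$'s choices identically in both systems, collapsing the global step to a Dirac before coupling. The paper instead leans entirely on the asymmetric time index $t'\le t$ (the global transition is already in the past when the local \textsc{sync} fires) together with stuttering, and never names the scheduler issue. Both devices are legitimate; yours makes explicit a back-direction obligation that the paper's three-line ``forth/back clauses'' paragraph leaves implicit, at the cost of quantifying over (or fixing) a scheduler, whereas the paper's time-lag formulation keeps the relation scheduler-free but arguably under-specifies how a probabilistic global step driven by $\gamma_b$ is matched before the corresponding refresh arrives. Your plan is sound; if anything, the scheduler remark strengthens the argument relative to what the paper records.
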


\begin{proof}[Proof Sketch]
Each proposal $u \in P_a$ sampled by $\gamma_a(S_a(t))$ with probability $p_u$ yields a local transition with matching probability.  
If validated, $u$ is integrated into $\mathcal{M}(t')$ and reflected in $\pi_{O_a}(\mathcal{M}(t'))$.  
Refresh propagation (Lemma~\ref{lem:delivery}) ensures that external updates synchronize locally.  
A probabilistic coupling, $\kappa$,  between transitions maintains bisimulation; see~\cite{segala1995probbisim} for full construction. Full proof in Appendix~\ref{app:proof-thm12}.
\end{proof}

\begin{remark}
This result generalizes slice-global semantic bisimulation to the case where update proposals may vary across invocations. It enables compositional verification and slice-local monitoring in hybrid systems with probabilistic reasoning. This relation is visualized in Figure \ref{fig5}, which illustrates the alignment of local and global transitions under the bisimulation relation~\cite{segala1995probbisim}.
\end{remark}

\begin{figure}[H]
    \centering
    \includegraphics[width=0.85\textwidth]{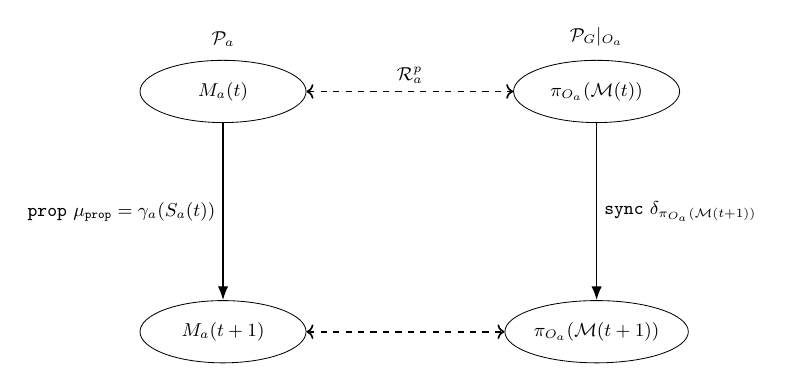}
    \caption{Probabilistic bisimulation between an agent’s local memory execution and its projected global memory slice under probabilistic update generation and refresh synchronization. The sync transition is depicted as aligning the agent’s memory with the global projection at time~\(t{+}1\) for illustrative clarity. In general, however, an agent’s local memory may lag, and the corresponding global state reflected in its slice may have been committed at an earlier time~\(t' \le t{+}1\), consistent with Theorem~\ref{thm:prob_slice_bisim}}
    \label{fig5}
\end{figure}

\begin{corollary}[Probabilistic Temporal Property Preservation]\label{cor:pctl_preservation}
Let $\varphi$ be any PCTL* formula over entities in $O_a$ that does not use the next-state operator $X$.
If the projected global execution satisfies $\Pr_{\ge p}[\varphi]$, then so does the local agent execution:
\begin{equation}
\mathcal{P}_G|_{O_a} \models \Pr_{\ge p}[\varphi]
\quad \Longrightarrow \quad
\pLTS{a} \models \Pr_{\ge p}[\varphi].
\end{equation}
\end{corollary}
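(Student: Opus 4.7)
The plan is to combine the probabilistic bisimulation established in Theorem~\ref{thm:prob_slice_bisim} with the standard preservation result for probabilistic temporal logic under bisimulation: bisimilar processes satisfy the same $X$-free PCTL* formulas with identical probability measures, so the quantitative threshold $\Pr_{\ge p}[\varphi]$ transfers directly from the projected global execution to the local one.

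First I would invoke Theorem~\ref{thm:prob_slice_bisim} to obtain the relation $\mathcal{R}^p_a$ pairing each local state $M_a(t)$ with its scoped global projection $\pi_{O_a}(\mathcal{M}(t'))$ for some $t' \le t$. The couplings guaranteed by this relation match the probabilities of corresponding \textsc{prop} and \textsc{sync} transitions, and the labeling functions coincide on propositions defined over entities in $O_a$. Consequently the two pLTSs induce the same probability measure on cylinder sets of paths whose truth valuations lie within the slice.

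Next, I would appeal to the classical PCTL* preservation theorem for probabilistic (stuttering) bisimulation, as developed by Segala~\cite{segala1995probbisim} and in the probabilistic model-checking literature~\cite{kupferman2000modelcheck}. States related by such a bisimulation satisfy the same $X$-free PCTL* formulas with equal path-probability values, so the measure assigned to $\varphi$ by $\pLTS{a}$ starting from $M_a(t)$ equals the measure assigned by $\pLTS{G}\!\mid_{O_a}$ starting from $\pi_{O_a}(\mathcal{M}(t'))$; the threshold $\Pr_{\ge p}[\varphi]$ therefore carries over.

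The main obstacle will be justifying the \emph{stuttering} character of the bisimulation: because the lag $t' \le t$ allows the local execution to trail scoped global updates, and because scoped refreshes involving entities outside $O_a$ leave the local slice unchanged while $\pLTS{G}\!\mid_{O_a}$ takes no visible step, the relation of Theorem~\ref{thm:prob_slice_bisim} pairs states whose $O_a$-labelings agree only modulo finite stutter. The exclusion of the $X$ operator is exactly what neutralizes this: any $X$-free PCTL* formula is invariant under finite insertions of stutter steps, so path-probability mass assigned to $\varphi$ is preserved. A short argument combining this stutter-invariance with the coupling structure of $\mathcal{R}^p_a$ then completes the reduction.
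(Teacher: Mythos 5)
Your proposal is correct and follows essentially the same route as the paper: invoke the probabilistic bisimulation of Theorem~\ref{thm:prob_slice_bisim} and then apply the standard preservation theorem for $X$-free PCTL* under probabilistic bisimulation (the paper cites \cite{baier2008principles} where you cite \cite{segala1995probbisim}). Your added discussion of why stutter-invariance is exactly what the exclusion of $X$ buys is a useful elaboration of a point the paper leaves implicit, but it does not change the argument.
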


\begin{proof}[Proof Sketch]
By Theorem~\ref{thm:prob_slice_bisim}, $\pLTS{a} $ and $\mathcal{P}_G|_{O_a}$ are related by probabilistic bisimulation.  
Probabilistic bisimulation preserves satisfaction of PCTL*$\setminus$\{X\} formulas~\cite{baier2008principles}, so the satisfaction probability of $\varphi$ matches between local and global executions.
\end{proof}

\begin{corollary}[Probabilistic Slice--Local Completeness]\label{cor:prob_slice_local_completeness}
Let $\varphi$ be any safety property expressible over ontology $O$, and suppose the family $\{O_a\}_{a\in A}$ covers $\mathcal{O}$.
If every agent execution $\pLTS{a}$ satisfies $\Pr_{\ge p}[\varphi]$, then the global execution $\pLTS{G}$ satisfies $\Pr_{\ge p}[\varphi]$ as well:
\begin{equation}
\forall a\in A,\quad \pLTS{a} \models \Pr_{\ge p}[\varphi]
\quad \Longrightarrow \quad
\pLTS{G} \models \Pr_{\ge p}[\varphi].
\end{equation}
\end{corollary}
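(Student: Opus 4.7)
The plan is to mirror the contrapositive argument of Theorem~\ref{thm:slice-completeness-safety} in the probabilistic setting, substituting Theorem~\ref{thm:prob_slice_bisim} for the deterministic stuttering bisimulation and invoking Corollary~\ref{cor:pctl_preservation} to preserve probabilistic satisfaction of safety formulas between $\pLTS{a}$ and $\pLTS{G}|_{O_a}$. Suppose for contradiction that $\pLTS{G} \not\models \Pr_{\ge p}[\varphi]$; then the measurable set $B$ of global traces violating $\varphi$ satisfies $\Pr_{\pLTS{G}}[B] > 1-p$. Because $\varphi$ is a safety property, each trace in $B$ admits a minimal finite bad prefix whose violation depends on only finitely many entities $\mathcal{E} \subseteq \mathcal{U}$, so the bad event is generated by a countable family of cylinder sets on finite prefixes.

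Next, I would apply the coverage hypothesis $\bigcup_{a \in A} O_a = \mathcal{O}$ exactly as in Theorem~\ref{thm:slice-completeness-safety} to localize each witness: some agent $a^{*}$ has a slice containing $\mathcal{E}$, and projecting the bad prefix onto $O_{a^{*}}$ still exhibits the violation. The coupling $\kappa$ guaranteed by Theorem~\ref{thm:prob_slice_bisim} matches, step for step, the probabilistic transitions of $\pLTS{a^{*}}$ with those of $\pLTS{G}|_{O_{a^{*}}}$ restricted to $O_{a^{*}}$-scoped events, and Corollary~\ref{cor:pctl_preservation} ensures that measures of PCTL$^{*}$ safety formulas without the next operator transfer faithfully under this coupling. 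Transferring the measure of the projected bad event then yields $\Pr_{\pLTS{a^{*}}}[\neg \varphi] > 1-p$, contradicting $\pLTS{a^{*}} \models \Pr_{\ge p}[\varphi]$ and completing the argument.

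The hard part will be the measure-theoretic step of showing that the global bad-event mass concentrates on at least one slice rather than being fractionally distributed across overlapping slices. I would address this by adopting the witness-localization convention implicit in Theorem~\ref{thm:slice-completeness-safety}: each finite bad prefix's witnessing entities lie within at least one covering slice, so the decomposition $B = \bigcup_{a \in A} B_a$, indexed by the slice that localizes the witness, covers $B$, and every slice that covers a given witness inherits the full violation probability of that $B_a$ under $\kappa$ rather than only a fractional share. A $\sigma$-additivity argument over the countable collection of minimal bad prefixes then selects a single slice $a^{*}$ whose local pLTS fails $\Pr_{\ge p}[\varphi]$, securing the contradiction and hence the corollary.
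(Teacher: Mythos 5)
Your overall strategy matches the paper's: argue by contradiction, use slice coverage to localize the violation, and transfer the violation probability from $\pLTS{G}$ to some $\pLTS{a^{*}}$ via the probabilistic bisimulation of Theorem~\ref{thm:prob_slice_bisim} (the paper's own sketch is exactly this, in two sentences). You also correctly identify the one genuinely delicate point, which the paper glosses over: why the global bad-event mass, which exceeds $1-p$, should concentrate on a \emph{single} slice rather than being spread across the family $\{B_a\}_{a\in A}$.

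The problem is that your proposed resolution of that point does not go through as stated. From the decomposition $B=\bigcup_{a\in A}B_a$ and $\Pr[B]>1-p$, $\sigma$-(sub)additivity gives only $\sum_{a}\Pr[B_a]\ge\Pr[B]>1-p$, and hence for finitely many agents only $\max_a\Pr[B_{a^{*}}]\ge(1-p)/|A|$. No single $B_{a^{*}}$ need carry mass exceeding $1-p$, so the contradiction with $\pLTS{a^{*}}\models\Pr_{\ge p}[\varphi]$ does not follow from counting alone. The step that actually closes the gap is not measure-theoretic but syntactic: $\varphi$ is a fixed safety formula over a fixed (finite) set of atomic propositions, and — as in the deterministic proof of Theorem~\ref{thm:slice-completeness-safety}, where the projection of the bad prefix onto $O_{a^{*}}$ "still violates $\varphi$ because $\varphi$ refers only to atoms in $O_{a^{*}}$" — the intended reading is that all of $\varphi$'s atoms lie within a single covering slice $O_{a^{*}}$. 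In that case \emph{every} violating trace is witnessed within that one slice, so $B=B_{a^{*}}$, the entire mass $\Pr[B]>1-p$ transfers under the coupling of Theorem~\ref{thm:prob_slice_bisim}, and the contradiction is immediate. If $\varphi$ genuinely spans atoms that no single slice contains, neither your argument nor the paper's sketch establishes the claimed threshold; you should either add the hypothesis that $\varphi$'s atoms are contained in some $O_{a^{*}}$, or weaken the conclusion accordingly. With that hypothesis made explicit, the rest of your argument (finite bad prefixes, cylinder-set measurability, transfer via $\kappa$ and Corollary~\ref{cor:pctl_preservation}) is sound and is a faithful elaboration of the paper's proof.
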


\begin{proof}[Proof Sketch]
Suppose, for contradiction, that $\mathcal{P}_G$ violates $\varphi$ with probability exceeding $1-p$.  
By slice coverage and Theorem~\ref{thm:prob_slice_bisim}, some agent's execution $\pLTS{a} $ must also reflect a violation with probability exceeding $1-p$, contradicting the premise.  
Thus, $\mathcal{P}_G \vDash \text{Pr}_{\geq p}[\varphi]$.
\end{proof}

\subsubsection*{Probabilistic Validation (Optional Extension)}
\label{subsubsec:prob-validation}

In practical deployments, updates may originate from modules whose outputs
exhibit variability or uncertainty (for example, some learned and
data-driven models), which can generate proposals that are syntactically
correct but semantically invalid. To reflect this, we introduce a notion
of \emph{probabilistic validation}. Let $\theta$ denote a proposed
structured update (which may be valid or invalid).

\begin{definition}[Probabilistic Local Validator]
Let $\varepsilon_a$ denote the probability that agent $a$ incorrectly
accepts an invalid proposal $\theta$ (false acceptance).  
Let $\xi_a$ denote the probability that agent $a$ incorrectly rejects
a valid proposal (false rejection). Then:
\[
\Pr[\text{accept } \theta \mid \theta \text{ invalid}] = \varepsilon_a,
\qquad
\Pr[\text{reject } \theta \mid \theta \text{ valid}] = \xi_a.
\]
\end{definition}

If each agent $a$ has $\varepsilon_a \ll 1$, and update proposals are
independently validated by all agents whose ontology slices intersect the
affected entities, then the probability that an \emph{invalid} update is
incorrectly reflected in global memory $\mathcal{M}(t)$ becomes
exponentially small in the number of such overlapping validators.

\begin{theorem}[Probabilistic Global Coherence]
\label{thm:prob-coherence}
Suppose each invalid proposal is accepted locally with probability at most
$\varepsilon_{\max}$, and must be independently validated by $r$ agents
(including the proposer). Then the probability that an invalid proposal is
committed to $\mathcal{M}(t)$ is at most $(\varepsilon_{\max})^{r}$:
\begin{equation}
\Pr[\theta \text{ invalid and committed to } \mathcal{M}(t)]
\le (\varepsilon_{\max})^{r}.
\end{equation}
\end{theorem}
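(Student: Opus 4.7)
The plan is to decompose the commitment event into a conjunction of $r$ independent false-acceptance events and then apply the product rule, bounding each factor by $\varepsilon_{\max}$.

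First, I would fix an invalid proposal $\theta$ and identify the commitment event $C(\theta) := \{\theta \text{ is integrated into } \mathcal{M}(t)\}$. Under the ontology-scoped validation rule of Section~\ref{subsec:update-proposals} and the scoped refresh protocol of Definition~\ref{def:refresh}, $\theta$ can enter $\mathcal{M}(t)$ only if the proposer and every peer whose slice intersects $\mathrm{dom}(\theta)$ locally accepts it. By the hypothesis that $r$ agents must independently validate $\theta$, there are exactly $r$ such validators $a_1,\dots,a_r$. Setting $A_i(\theta) := \{a_i \text{ accepts } \theta\}$, we then have the inclusion $C(\theta) \subseteq \bigcap_{i=1}^{r} A_i(\theta)$, so by monotonicity
\[
\Pr[\theta \text{ invalid and committed}] \le \Pr\!\Bigl[\bigcap_{i=1}^{r} A_i(\theta)\,\cap\,\{\theta \text{ invalid}\}\Bigr].
\]

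Second, I would invoke the independence hypothesis built into the theorem's wording: each validator applies its probabilistic predicate locally, so conditional on $\theta$ being invalid the events $A_i(\theta)$ are mutually independent. Combining this with the per-agent false-acceptance bound $\Pr[A_i(\theta)\mid\theta\text{ invalid}] = \varepsilon_{a_i} \le \varepsilon_{\max}$ from the definition of probabilistic local validators yields
\[
\Pr\!\Bigl[\bigcap_{i=1}^{r} A_i(\theta) \,\Big|\, \theta \text{ invalid}\Bigr]
= \prod_{i=1}^{r} \varepsilon_{a_i} \le (\varepsilon_{\max})^{r},
\]
and multiplying by $\Pr[\theta\text{ invalid}] \le 1$ gives the stated joint bound.

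The arithmetic is routine, so the principal obstacle is justifying the independence step. In realistic deployments validators may share ontology fragments, calibration data, or upstream inputs, and such latent correlations can inflate the joint acceptance probability above $\prod_i \varepsilon_{a_i}$. I would therefore make independence of the per-agent validation errors an explicit assumption in the formal proof, and note in a remark that when this assumption is relaxed, the same argument still yields a looser bound (for example, by a union-style estimate or by working with a worst-case coupling), trading the exponential decay in $r$ for a weaker guarantee that nonetheless preserves the qualitative claim that additional independent validators strictly reduce the probability of committing an invalid proposal.
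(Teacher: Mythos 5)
Your proposal is correct and follows essentially the same argument as the paper's proof: the commitment of an invalid $\theta$ requires all $r$ independent validators (proposer plus slice-overlapping peers) to falsely accept, and the product of the per-agent bounds $\varepsilon_{a_i} \le \varepsilon_{\max}$ gives $(\varepsilon_{\max})^{r}$. Your added care in making the independence assumption explicit and noting how the bound degrades under correlated validators is a reasonable refinement, but it does not change the route.
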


\begin{proof}[Proof Sketch.]
Let $\theta$ be an invalid update proposed by agent $a$. Each of the $r$
validators---the originator and all agents whose slices intersect its
scope---accepts it with probability at most $\varepsilon_{\max}$. The
validators act independently, so the joint probability is at most
$(\varepsilon_{\max})^{r}$.
\end{proof}

A false rejection only harms the rejecting agent: if an agent incorrectly rejects a
valid update (a false rejection), it performs a semantic stutter step
$M_b(t+1)=M_b(t)$, which constitutes a locally absorbing non-update. Such
local rejections do not modify global memory and do not affect the
probabilistic consistency guarantees of Theorem~\ref{thm:prob-coherence}.

\begin{remark}
This result extends monotonic semantic coherence
(Theorem~\ref{thm:semantic-coherence}) to settings with imperfect
validators. Even without centralized coordination, the system remains
resilient: an invalid update must be simultaneously accepted by all
validators to corrupt global memory. The probability of total semantic
corruption decays exponentially with the number of validators. This becomes
particularly relevant when agents rely on probabilistic or learned modules,
including LLM-based components, whose outputs are not guaranteed to satisfy
all semantic constraints in every instance.
\end{remark}

\begin{example}
\textbf{Illustration.}
Suppose $\varepsilon_{\max} = 0.02$ and $r = 3$ overlapping validators.
Then the probability of a globally committed invalid update is at most
$(0.02)^3 = 8 \times 10^{-6}$.
\end{example}

\paragraph{Summary.}
Theorems~\ref{thm:prob_slice_bisim} and~\ref{thm:prob-coherence} provide
probabilistic guarantees on semantic coherence: independent slice-local
validators reduce the probability of an invalid update entering shared
memory, ensuring reliable state evolution even when proposal generation is
nondeterministic.
\subsection{Dynamic Semantic Convergence}\label{subsec:dynamic-convergence}

While the preceding results assumed a fixed set of validated updates committed by a finite time $t_0$, in practical settings semantic memory evolves continuously. New updates may be integrated indefinitely as agents observe and interact with their environments.  
The following theorem formalizes \emph{dynamic semantic convergence}: under mild probabilistic assumptions, agents re-synchronize their memory slices with the evolving global state even without retries of missed earlier updates.

\begin{definition}[Stream-Fair Coverage]\label{def:stream-fair-coverage}
Let $\{\delta^{(t)}\}_{t\ge t_0} \subseteq \Delta$ denote the (potentially infinite) stream of validated updates committed after time $t_0$.  
For an agent $b$ with ontology slice $O_b$, define the event
\[
C_b(t) = \bigl( O_b \cap \mathrm{dom}\bigl(\delta^{(t)}\bigr) \neq \varnothing),
\]
indicating that the $t^\text{th}$ update intersects $b$'s slice.  
The stream satisfies \emph{stream-fair coverage} with parameter $\eta > 0$ if
\[
\Pr\bigl[C_b(t)\bigr] \ge \eta
\quad\text{independently for all } t\text{ and each agent } b.
\]
\end{definition}

\begin{theorem}[Dynamic Semantic Convergence]
\label{thm:dynamic-convergence}
Fix an agent \( b \) with ontology slice \(\ O_b \), and a (potentially infinite) stream of validated updates \( \{ \delta^{(t)} \}_{t \ge t_0} \).  
For each update \( \delta^{(t)} \), let \( \mathcal{E}^{(t)} \subseteq \mathcal{U} \) denote the set of modified semantic entities (i.e., the refresh notification target as in Definition~\ref{def:refresh}).

Suppose that:
\begin{itemize}
\item[(i)] Every update is ontology-consistent and scoped to a finite set of ontology entities \( \mathcal{E}^{(t)} \);
\item[(ii)] For each update \( \delta^{(t)} \), if \( O_b \cap \mathcal{E}^{(t)} \neq \emptyset \), the scoped refresh is delivered to agent \( b \) with probability at least \( \rho > 0 \), independently across \( t \);
\item[(iii)] For each entity \( x \in O_b \), the probability that \( x \in \mathcal{E}^{(t)} \) at a given time step is at least \( \eta > 0 \), independently across \( t \).
\end{itemize}

Then, with probability 1, for every entity \( x \in O_b \),
\begin{equation}
\lim_{t \to \infty} \mathcal{M}_b(t)[x] = \mathcal{M}(t)[x].
\end{equation}
Moreover, the time required for agent \( b \) to receive \( r \) relevant updates grows with high probability no faster than \( \Theta(r / (\rho \eta)) \), and the probability of delay beyond this bound decays exponentially in \( r \).

\end{theorem}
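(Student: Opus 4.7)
My plan is to factor the proof into two independent pieces: an almost-sure synchronization step using the second Borel--Cantelli lemma applied one entity at a time, and a quantitative tail bound via a multiplicative Chernoff inequality on sums of independent Bernoullis. I will use only hypotheses (i)--(iii), the semantics of refresh from Definition~\ref{def:refresh}, and a union bound over the (at most countable) elements of $O_b$.

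Fix $x \in O_b$, and for each $t \geq t_0$ let $A^x_t$ be the event that $x \in \mathcal{E}^{(t)}$ and the scoped refresh for $\delta^{(t)}$ reaches agent $b$. Conditions (ii) and (iii), together with their stated per-step independence, give mutually independent events $\{A^x_t\}_t$ with $\Pr[A^x_t] \geq \rho\eta > 0$. When $A^x_t$ holds, the refresh semantics force $b$ to revalidate and integrate $\delta^{(t)}$, setting $\mathcal{M}_b[x]$ equal to $\mathcal{M}(t)[x]$ immediately after. Since $\sum_t \Pr[A^x_t] = \infty$, the second Borel--Cantelli lemma yields $\Pr[A^x_t \text{ i.o.}] = 1$, so $b$ re-synchronizes on $x$ infinitely often almost surely. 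To upgrade this to the claimed limit, I would argue that any mismatch at time $t$ must be witnessed by a trailing sequence of modifications to $x$ since the most recent $A^x$-event, each of which failed to deliver independently with probability at most $1-\rho$. The probability of such a trailing tail of length $k$ is therefore at most $(1-\rho)^k$, which tends to $0$; union-bounding over $x \in O_b$ preserves the conclusion simultaneously across the slice.

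For the rate bound, let $N^x_t = \sum_{s=t_0}^{t_0+t} \mathbf{1}[A^x_s]$. This is a sum of independent Bernoullis with parameter at least $\rho\eta$, so the standard multiplicative Chernoff bound gives, for any $\lambda \in (0,1)$,
\[
\Pr\bigl[N^x_t \leq (1-\lambda)\rho\eta\, t\bigr] \leq \exp\bigl(-\lambda^2 \rho\eta\, t / 2\bigr).
\]
Equivalently, for the hitting time $T^x_r := \min\{t : N^x_t \geq r\}$, setting $t = r/\bigl((1-\lambda)\rho\eta\bigr)$ yields $\Pr\bigl[T^x_r > r/((1-\lambda)\rho\eta)\bigr] \leq \exp(-\Omega(r))$, which is precisely the claimed $\Theta(r/(\rho\eta))$ concentration with tail decaying exponentially in $r$.

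The principal obstacle is making the limit statement precise: strict pointwise equality $\mathcal{M}_b(t)[x] = \mathcal{M}(t)[x]$ for all sufficiently large $t$ cannot hold with probability one when $\rho < 1$ and the update stream is infinite, since each fresh modification of $x$ produces a transient mismatch with positive probability. The natural and defensible reading, and the one the Borel--Cantelli step above actually delivers, is that the probability of mismatch at time $t$ tends to $0$ and that mismatches concentrate in vanishing transient windows almost surely. A secondary issue to state carefully is the joint independence of coverage (iii) and delivery (ii) across $t$; the stream-fair coverage hypothesis (Definition~\ref{def:stream-fair-coverage}) and delivery assumption (ii) must be invoked jointly rather than marginally to support both the Borel--Cantelli and Chernoff steps.
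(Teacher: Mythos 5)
Your quantitative argument is essentially the paper's: the paper also reduces the rate bound to a Chernoff inequality, though it factors the Bernoulli parameter as density~$\eta$ of slice-relevant times followed by delivery probability~$\rho$, whereas you fold both into a single per-entity event $A^x_t$ with parameter $\rho\eta$. Both yield the $\Theta(r/(\rho\eta))$ hitting-time bound with exponentially decaying tail, and your per-entity formulation plus the union bound over $O_b$ is if anything cleaner. Your explicit use of the second Borel--Cantelli lemma for the infinitely-often synchronization is also a genuine improvement in rigor: the paper's sketch simply asserts that ``each key is touched with nonzero frequency and each such update has a nonzero chance of being delivered, so every key is eventually refreshed,'' without naming the tool.

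Where you go beyond the paper is in flagging that the limit $\lim_{t\to\infty}\mathcal{M}_b(t)[x]=\mathcal{M}(t)[x]$ cannot hold literally (as eventual pointwise equality) when $\rho<1$ and modifications of $x$ continue forever: each fresh modification is dropped with probability up to $1-\rho$, producing a new transient mismatch, so mismatches recur almost surely. The paper's proof does not acknowledge this, so you have identified a real gap in the target statement rather than introduced one. However, your proposed repair is itself not quite right: the probability of mismatch at a fixed large $t$ does \emph{not} tend to $0$, since $b$ is mismatched on $x$ exactly when the most recent modification of $x$ was dropped, an event of probability bounded below by a constant whenever drops occur. Your trailing-tail bound $(1-\rho)^k$ controls the probability that the staleness has persisted through $k$ consecutive missed modifications, i.e., it shows the \emph{depth} of staleness is geometrically bounded, not that mismatch becomes rare. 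The statement that is actually provable (and is what the paper's alignment-delay experiment measures) is that the realignment delay after each global update to $x$ has a geometric tail with rate governed by $\rho\eta$; phrasing the convergence claim in those terms would close the gap in both your argument and the paper's.
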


\begin{proof}[Proof Sketch]
Let $\mathcal{T} = \{ t \ge t_0 : O_b \cap \mathrm{dom}(\delta^{(t)}) \neq \varnothing \}$ denote the time steps when updates touch $b$'s slice.  
By assumption (iii), the expected density of $\mathcal{T}$ is at least $\eta$.

For each $t \in \mathcal{T}$, define $Y_t = 1$ if the scoped refresh for $\delta^{(t)}$ is successfully delivered to $b$, and $0$ otherwise.  
By assumption (ii), the $Y_t$ are independent and $\Pr[Y_t = 1] \ge \rho$.

Let $Z_r$ be the time by which $b$ has received $r$ slice-relevant updates:
\[
Z_r = \min \left\{ t : \sum_{t' \in \mathcal{T}, t' \le t} Y_{t'} \ge r \right\}.
\]
By a Chernoff bound, we have:
\[
\Pr[ Z_r - t_0 > k ] \le e^{-\Omega(r)} \quad \text{for } k = \Theta(r / (\rho \eta)).
\]

Each successful refresh integrates an ontology-valid update into $M_b(t)$.  
Because each key in $O_b$ is touched with nonzero frequency, and each such update has a nonzero chance of being delivered, every key is eventually refreshed.  
Thus, with probability $1$, $M_b(t)$ converges to the projected global memory:
\[
\lim_{t \to \infty} M_b(t) = \pi_{O_b}(\mathcal{M}(t)).
\]
\end{proof}

\noindent\emph{Note.} Lemma~\ref{lem:delivery} assumes reliable delivery and is used in the bisimulation setting (Theorem~\ref{thm:slice_bisim}).  
Theorem~\ref{thm:dynamic-convergence} does not assume universal delivery; instead, it guarantees that relevant updates are received and integrated with high probability over time.

\noindent\textbf{Communication Cost.}  
Each validated update incurs exactly one $\Theta(d)$ scoped refresh attempt at commitment, independent of retries.  
Thus, the total communication overhead per update remains $\Theta(d)$, preserving the scaling bound established in Corollary~\ref{cor:bound-tightness}.

\section{Experimental Validation}\label{sec:experimental-val}

To validate the Semantic Fusion (SF) model under operational constraints, we implemented a lightweight instantiation that exercises its core constructs in a dynamic simulation environment. The goal was not empirical benchmarking, but structured confirmation that the model’s theoretical properties hold under interpretable, asynchronous conditions. The experiment evaluates the guarantees developed in the formal model—including semantic coherence (Theorem~\ref{thm:semantic-coherence}), probabilistic slice convergence (Theorem~\ref{thm:dynamic-convergence}), causal isolation (Theorem~\ref{thm:causal-isolation}), stuttering bisimulation (Theorem~\ref{thm:slice_bisim}), probabilistic bisimulation (Theorem~\ref{thm:prob_slice_bisim}), and local failure containment (Proposition~\ref{prop:local-failure}).

A second purpose of the experiment is to demonstrate that coordinated behavior can arise from interaction with shared semantic memory. The experiment therefore tests whether agents, despite having partial and asynchronous local views, can maintain an aligned and semantically coherent common operating picture through the shared semantic state.

\subsection{Simulation Scenario}

The scenario used 250 autonomous agents: 50 of type \textit{search} and 100 each of type \textit{relay} and \textit{rescue}, collaborating in a decentralized search-and-rescue mission via ontology-governed memory. Search agents monitored assigned regions and proposed updates when their internal policy generated stochastic survivor indications. Relay agents repositioned in response to active zones and logged their presence on arrival. Rescue agents listened for semantic cues, bid for recovery tasks, and executed actions only when relay support was available. All interaction occurred through semantic memory updates, without centralized coordination.

Each agent’s observation space consists solely of (i) its ontology-scoped semantic slice $S_a(t)$ and (ii) stochastic triggers from its internal scripted policy. Agents have no access to exogenous variables or hidden world structures. The action space consists of proposing updates, moving between adjacent zones, and issuing task bids (for rescue agents). All agents execute as independent asynchronous coroutines; the global tick advances only after all per-agent coroutines complete.

Midway through the run, two agents were removed without prior signaling to evaluate fault resilience.

\subsection{Simulation Setup}
\label{subsec:sim-setup}

Each agent operated over a scoped semantic slice derived from a shared ontology
encoding task-relevant entities such as search zones, survivor statuses, and
relay coverage. Validated updates were disseminated through a push-based
scoped-delivery mechanism: only agents whose slices intersected a modified
entity received refresh notifications, after which they revalidated and
integrated the update. This routing layer enforced slice-local visibility and
exercised the same semantics as the model’s selective refresh process without
assuming a particular communication substrate.

The environment was fully symbolic. Zones were grid identifiers, survivors were
logical state variables, and agents interacted only with symbolic percepts
rather than a continuous or embodied world. Each agent had a defined
observation space: search agents observed survivor state in their current zone,
relay agents observed coverage gaps, and rescue agents observed task
announcements. Observations were converted into
structured update proposals during asynchronous agent steps.

Agent behavior followed lightweight scripted policies chosen to produce
interpretable update streams. Search agents moved deterministically but sampled
survivor detections with a fixed $30\%/70\%$ distribution. Relay and rescue
agents were initialized at random positions; rescuers generated bids with
random tie-breaking and sampled service durations from a fixed distribution.
These policies introduced heterogeneous stochastic proposals without learned
controllers or external perception modules, ensuring that observable effects
reflected the SF model rather than complex agent decision logic. All
randomness originated in the policy modules rather than the environment.

To confirm that ontology validation was non-vacuous, we injected three
adversarial proposals that violated ontology constraints. Each was rejected
locally through the standard validation path, producing a stutter step without
altering global memory.

These scripted policies were intentionally simple: their purpose was to
exercise the SF semantics under interpretable conditions, not to approximate
realistic sensing or planning. Evaluating SF under richer or degraded policies
is an important direction for future work.

\subsection{Evaluation Criteria}
\label{subsec:eval-criteria}

Simulation logs were analyzed to verify each of the following theoretical properties:

\begin{itemize}
  \item \textbf{Semantic Coherence (Theorem~\ref{thm:semantic-coherence}):} Number of invalid updates accepted by any agent.
  \item \textbf{Causal Isolation (Theorem~\ref{thm:causal-isolation}):} Whether all updates integrated into each agent’s memory fell strictly within its semantic slice.
  \item \textbf{Stuttering Bisimulation (Theorem~\ref{thm:slice_bisim}):} Forward temporal alignment between each agent’s memory trace and the projected global execution over its slice. The reverse direction is unnecessary for the safety guarantees we claim.
  \item \textbf{Probabilistic Bisimulation (Theorem~\ref{thm:prob_slice_bisim}):} Whether validated probabilistic proposals were eventually reflected in scoped memory.
  \item \textbf{Failure Containment (Proposition~\ref{prop:local-failure}):} Post-removal continuation of valid update propagation and maintenance of memory consistency.
\end{itemize}

\subsection{Results}
\label{subsec:results}

Each of the formal properties was upheld in the simulation, which processed 11,325 updates. In the main simulation runs, no invalid proposals were generated by the scripted policies. All invalid-update rejection behavior therefore arises exclusively from the adversarial injection mechanism described in \ref{subsec:sim-setup}. No invalid updates were accepted 
(Theorem~\ref{thm:semantic-coherence}) and every validated update conformed to the agent’s scoped ontology (Theorem~\ref{thm:causal-isolation}).

Each agent’s local memory remained aligned with its projected global slice under bounded delay\footnote{For evaluation purposes, we exclude the final time step from bisimulation tests, as agents are not given a chance to re-align after the last committed updates.} (Theorem~\ref{thm:slice_bisim}), and 1,250 validated probabilistic proposals\footnote{Bisimulation tests are based on snapshot alignment between local memory and the agent's projected global slice, not on event log volume. The validation compares memory states extracted at each time step, allowing bounded delay, and verifies that local beliefs remain consistent with the converged global memory.} were correctly reflected in scoped memory without mismatch (Theorem~\ref{thm:prob_slice_bisim}). Following agent removal, no coherence violations or memory corruption were observed, validating Proposition~\ref{prop:local-failure}.

\begin{table}[ht]
\centering
\caption{Empirical Validation Results for a 250-Agent Search-and-Rescue Simulation}
\label{tab:exp-results}
\begin{tabular}{p{0.32\linewidth} p{0.52\linewidth} p{0.1\linewidth}}
\toprule
\textbf{Theorem / Proposition} & \textbf{Validation Metric} & \textbf{Score} \\
\midrule
Theorem~\ref{thm:semantic-coherence} (Semantic Coherence) & Invalid updates accepted & 0 / 11,325 \\
Theorem~\ref{thm:causal-isolation} (Causal Isolation) & Updates outside agent scope accepted & 0 \\
Theorem~\ref{thm:slice_bisim} (Stuttering Bisimulation) & Agent projection violations & 0 / 250 \\
Theorem~\ref{thm:prob_slice_bisim} (Probabilistic Bisimulation) & Validated proposals mismatched & 0 / 1,250 \\
Proposition~\ref{prop:local-failure} (Failure Containment) & Post-removal coherence violations & 0 \\
\bottomrule
\end{tabular}
\end{table}

Figure~\ref{fig:zoneA_timeline} illustrates a representative event sequence. Logs verified that throughout the run, agents actively executed their synchronized tasks: search agents continued to discover survivors, and rescue agents dispatched and completed recoveries, demonstrating that agents can exhibit coordinated behavior
solely through the semantics of validated updates and local policies.

\begin{figure}[ht]
  \centering
  \includegraphics[width=0.85\linewidth]{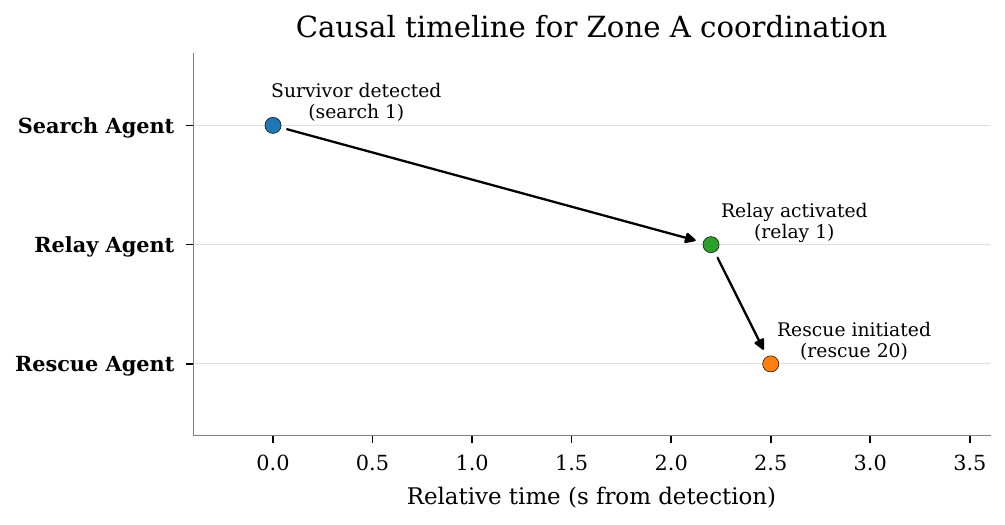}
  \caption{A survivor is detected by \texttt{search1}, triggering an ontology-valid update to shared memory. This change activates \texttt{relay1}, which moves to the zone and signals presence. Upon detecting relay support, \texttt{rescue20} initiates recovery. Each agent acts independently based on scoped semantic memory, without direct coordination or explicit messages.
}
  \label{fig:zoneA_timeline}
\end{figure}

\subsection{Communication Overhead Scaling}
\label{subsec:comm-scaling}

To empirically validate the message complexity bound from
Corollary~\ref{cor:bound-tightness}, we perform a controlled micro-benchmark
that directly models semantic propagation based on channel overlap. The theorem
establishes that communication cost for a validated update grows as
$\Theta(d)$—where $d$ is the number of agents whose semantic views intersect
the update—and not with the total agent population $n$. The experiment is
intentionally synthetic: it mirrors the theoretical abstraction of
independent prefix subscriptions and scoped dissemination, isolating the
semantic-overlap effect without conflating it with policy dynamics or update
frequencies arising in the full simulator.

Each ontology key is treated as a distinct semantic channel, and each agent
subscribes to a randomly chosen fraction $f$ of these channels. For each
$f \in [0,1]$, we generate 100 randomized slice assignments over a 
200-agent pool (100 relay, 100 rescue). We then publish an update on each key
and count the number of subscribing agents that would receive it under scoped
delivery. Adding the proposer, the expected message count per update is
$k + 1$, where $k = \mathrm{round}(f \cdot 200)$.

Figure~\ref{fig:comm-scaling} shows the results. The observed message cost
scales linearly with $f$, closely matching the ideal curve
$f \cdot 200 + 1$ across all slice fractions. At full overlap ($f = 1$), the
system recovers broadcast behavior (201 messages), while at low overlap
($f = 0.1$), communication cost decreases by more than 85\%. Variance is
highest at low $f$ due to the combinatorial distribution of slice membership
and vanishes as $f$ increases.

In real deployments, message loss or delay may occur. Because propagation is
slice-scoped rather than global, message redundancy remains inherently
bounded: only agents that subscribe to a given prefix must receive the
corresponding notification. In probabilistic channels, the expected message
volume increases by a factor $1/(1 - p)$, where $p$ is the drop probability,
preserving $\Theta(d)$ scaling for $d$-slice overlap. This ensures predictable,
graceful degradation under lossy conditions.

This benchmark therefore confirms that scoped propagation provides tight,
predictable control over communication cost even as the system scales.

\begin{figure}[ht]
  \centering
  \includegraphics[width=.7\linewidth]{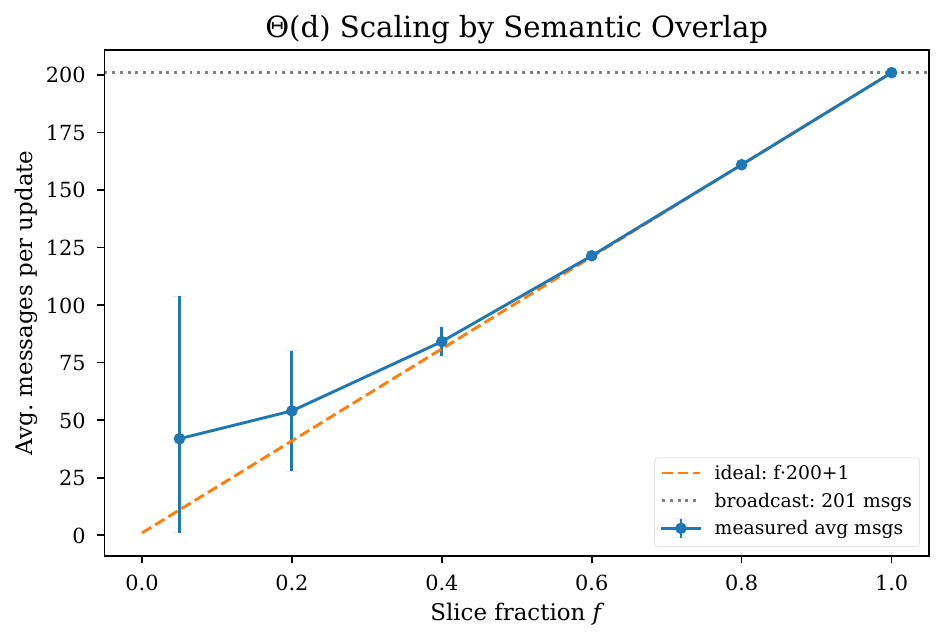}
  \caption{%
Message complexity scaling under semantic overlap. Each point shows the average number of agents receiving an update on a semantic key, as a function of the slice fraction $f$, averaged over 100 randomized slice assignments. Agents subscribe to a random $f$-fraction of ontology keys, and each update is received by all agents whose slice overlaps with the updated key. The expected message cost is $k+1$ with $k = \mathrm{round}(f \cdot 200)$. Observed values closely match the ideal bound $f \cdot 200 + 1$ (dashed line), with highest variance at low $f$ due to small slice sizes. The dotted line shows the broadcast baseline (201 messages).%
}
\label{fig:comm-scaling}
\end{figure}

\subsection{Geometric-Tail Convergence Validation}

To evaluate the exponential-delay bound predicted by Theorem~\ref{thm:dynamic-convergence}, we measured per-key alignment delays in a 100-step simulation under varying message delivery probabilities. Each agent maintained a locally scoped memory view and updated asynchronously based on validated proposals received through scoped refresh.

For each global memory update at time step $t$, we tracked the first future step $t' > t$ at which each agent's local view for the corresponding key matched the global memory. If such a match occurred, the alignment delay was recorded as $\Delta_b^{(t)} = t' - t$. This process captures the time it takes for an agent to realign with the global memory state, regardless of whether it received the original update or converged later through redundant proposals or indirect propagation.

Figure~\ref{fig:alignment_tail_plot} shows the empirical survival function $S(k) = \Pr[\Delta_b^{(t)} > k]$ across three delivery probabilities ($\rho = 0.2$, $0.5$, $0.8$). Each empirical curve is overlaid with a fitted exponential decay $e^{-\lambda k}$, where $\lambda$ is estimated independently for each trace. All curves exhibit tails consistent with exponential upper bounds, confirming the geometric decay predicted by Theorem~\ref{thm:dynamic-convergence}.

Notably, the estimated decay rates remain consistent across delivery probabilities, with $\lambda \approx 0.04$–$0.05$ for all runs. This stability suggests that convergence is driven not only by direct message reception, but also by structural redundancy and repeated proposal, indicating a form of self-healing propagation beyond simple i.i.d.\ delivery. The result empirically confirms the exponential bound and demonstrates that convergence occurs reliably and within bounded time under degraded yet recoverable conditions.

\begin{figure}[t]
  \centering
  \includegraphics[width=0.75\linewidth]{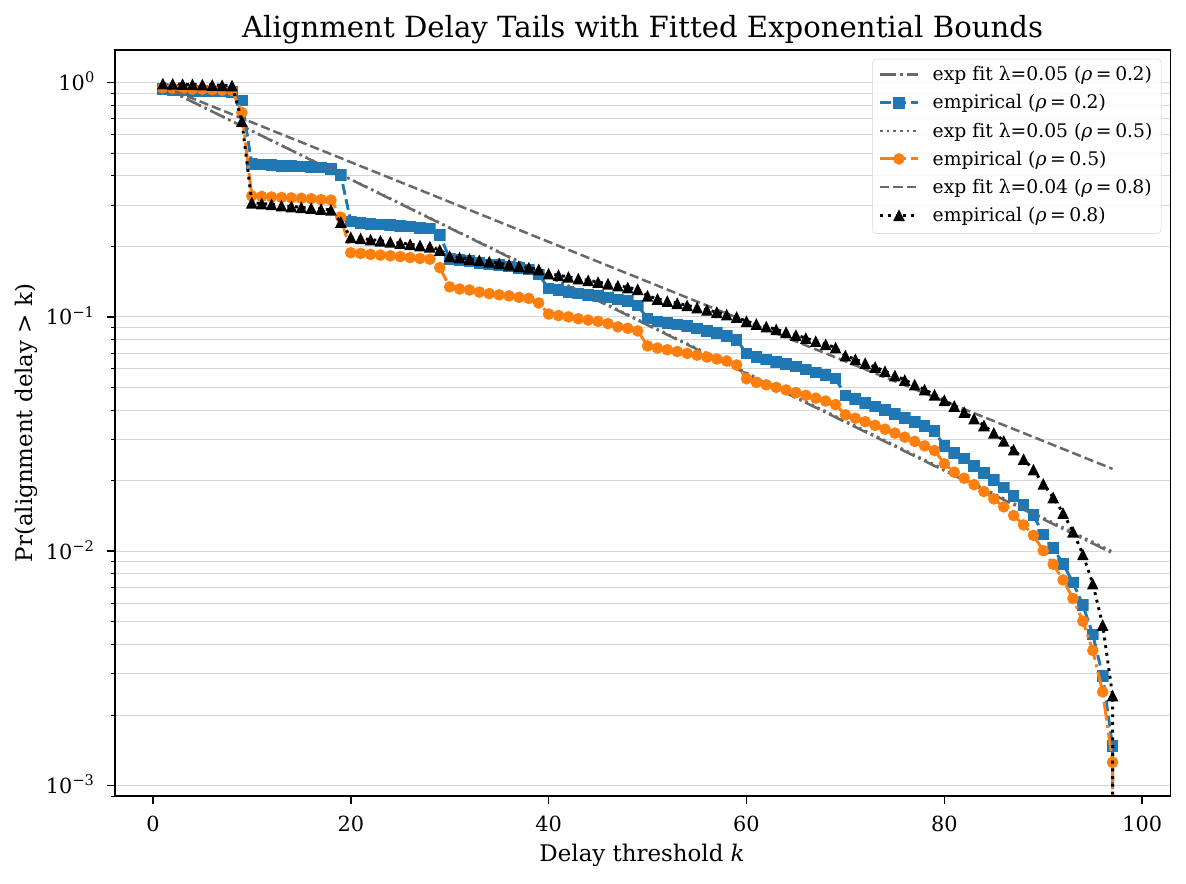}
  \caption{Empirical survival functions $S(k)$ vs.\ fitted exponential decays $e^{-\lambda k}$ for per-key alignment delays at different communication probabilities ($\rho$). All curves exhibit exponential tails, confirming the bound predicted by Theorem~\ref{thm:dynamic-convergence}. Minor mid-range deviation above the bound reflects conservative tail modeling and non-i.i.d.\ system structure, but asymptotic decay remains faster than the theoretical threshold.}
  \label{fig:alignment_tail_plot}
\end{figure}

\section{Discussion and Future Work}
\label{sec:discussion}

Semantic Fusion provides the shared semantic state that allows decentralized agents to operate with a consistent decision context, supporting coordinated behavior without centralized control. While these guarantees hold in the formal model, real deployments introduce factors not fully captured here, including asynchronous communication, partial failure, adversarial effects, and evolving ontologies. The discussion below considers SF’s core mechanisms—slice-local reasoning, scoped memory access, and decentralized validation—within these practical settings.

\subsection{Practical Deployment Considerations}
\label{subsec:comms-adversarial}

\paragraph{Robustness to Asynchrony and Adversarial Conditions.}
Semantic alignment supports coordinated behavior only when the operating picture remains sufficiently synchronized under imperfect communication. The formal model assumes reliable refresh propagation, but real systems face message loss, delays, and interference. Standard dissemination strategies, such as acknowledgment-based or epidemic protocols~\cite{demers1988epidemic}, can mitigate but not eliminate these issues. Convergence therefore becomes probabilistic, and a complete formal treatment of this regime remains open.

Trust assumptions may also vary. Agents may apply stronger validation to critical updates while accepting best-effort delivery for less consequential information. Cryptographic signatures~\cite{castro2002practical} or semantic endorsements can provide additional assurance in adversarial settings without requiring uniform trust across agents.

The model tolerates partial message loss, agent dropout, and asynchronous refresh without violating its semantic guarantees, indicating robustness to moderate degradation. As decentralized systems increasingly operate in contested or resource-limited environments, slice-based semantics offer a structured way to preserve coherence when communication is unreliable. One direction for future work is a probabilistic analogue of bisimulation in which local traces converge asymptotically, with high probability, to their projected global behaviors. We refer to this as \emph{probabilistic asymptotic bisimulation}.

The guarantees established here concern semantic safety: ensuring that memory evolution respects ontology constraints even under nondeterministic behavior or degraded communication. They do not address behavioral robustness or performance. An agent may act inefficiently while remaining semantically aligned with the global state. SF therefore separates semantic correctness from behavioral quality. In practice, SF provides the semantic foundation on which more resilient behaviors can be built, even as conditions degrade. Task-level efficiency or strategy selection, however, must be supplied by mechanisms beyond the core trace semantics.

\paragraph{Challenges of Scoped Delivery.}
The model assumes that refresh notifications can be routed only to agents whose slices intersect a modified entity. While realizable with content-based publish/subscribe systems, deploying strict scoped delivery at scale introduces engineering considerations outside the scope of the formal semantics. Maintaining per-entity routing state, updating slice membership as ontologies evolve, and ensuring correct delivery under high update turnover or partial failures all require additional infrastructure. Crucially, SF does not rely on a physical global store. The “global memory” is a semantic object reconstructed from the distributed ensemble of slice-local traces. Thus, correctness depends on slice semantics, not on a centralized memory substrate.

These concerns belong to the communication substrate rather than the formal model, but they influence how reliably agents maintain a coherent operating picture in practice. Evaluating SF under heterogeneous or degraded agent policies, where semantic consistency holds but behavioral quality varies, remains a useful direction for empirical study.

\subsection{Expressivity, Generality, and Ontology Evolution}
\label{subsec:onto-evolution}

\paragraph{Illustrative Application.}
Although our evaluation focuses on multi-agent coordination, the slice semantics are domain-agnostic. In distributed cyber defense, for example, agents can use slice-scoped state (such as \texttt{ThreatLevel@Host} or \texttt{ConnectionStatus}) to enforce local safety rules—for instance, isolating a host only when a connected peer has escalated threat status. By Theorem~\ref{thm:slice_bisim}, verifying such rules slice-locally suffices for global correctness, converting local checks into system-wide safety without centralized monitoring.

\paragraph{Ontology Evolution and Versioning.}
The current model assumes a fixed ontology, but long-lived deployments will require schema evolution. Supporting versioning, where slices are tagged and evolve incrementally, introduces challenges in consistency and migration. Future work may explore version snapshots, change propagation, and scoped migration protocols that maintain SF’s guarantees while allowing localized updates. Preserving coherent operating pictures under evolving ontologies is essential for practical use.

\paragraph{Generalization and Portability.}
The simulation instantiates one ontology and fixed roles, but neither is fundamental to SF. Slices can be redefined or extended without altering the semantics of validation or refresh propagation. Roles may be fixed or assigned dynamically. Update generators may be scripted, learned, or produced through generative models, provided their outputs satisfy ontology constraints. This flexibility allows SF to be applied across different domains while retaining its formal guarantees. Adapting SF to new settings requires only redefining the ontology, not modifying the core semantics.

\subsection{Behavioral Implications and Coordination Guarantees}
\label{subsec:expanded-reasoning}

\paragraph{Enabling Richer Local Reasoning.}
While Semantic Fusion enforces semantic coherence across distributed agents, it also permits expressive forms of local reasoning. These capabilities rely on SF to ensure that the semantic context within which decisions are made remains aligned across agents, grounding heterogeneous policies in a shared operating picture.

As agents incorporate learned or generative components whose outputs may vary across invocations, this distinction becomes important. SF imposes no assumptions on how such outputs are produced; it constrains only the admissibility of updates that enter shared memory. One direction for extension is \emph{policy-constrained behavior}, in which agents are assigned local contracts: guarded rules over their private state and slice-scoped memory that restrict the actions they may take. These contracts can be enforced independently by each agent while still ensuring that global properties are satisfied, provided the required convergence and coverage conditions hold.

This allows each agent to interpret and enforce its policy using only the information available in its slice. When combined with SF’s convergence guarantees, such contracts yield predictable behavior even when communication is delayed or partially unavailable. This enables formal guarantees on policy satisfaction, auditability, and compositionality, supporting scalable coordination in mission-critical settings.

Semantic Fusion therefore provides the underlying memory and trace semantics needed for higher-level mechanisms such as policy enforcement, delegated contracts, and forms of tacit coordination. Although these constructs do not appear in the core model, SF supplies the semantic substrate required to incorporate them safely, and they are developed further in follow-on work.

\paragraph{Slice Overlap and Convergence Tradeoff.}
Finally, Semantic Fusion exposes a natural tradeoff between communication efficiency and synchronization robustness. Smaller ontology slices reduce refresh overhead, yielding $\Theta(d)$ communication per update, but decrease the probability $\eta$ that a new update touches a given agent’s slice, slowing re-synchronization after missed updates. Broader slices increase overlap and speed convergence at the cost of greater communication. This flexibility allows slice design to balance efficiency and robustness depending on the application.

The core model guarantees slice convergence over update rounds, but prolonged quiescence can still delay re-synchronization. In practice, periodic heartbeats or light refresh probes can maintain alignment during inactive periods without requiring global coordination. These mechanisms preserve SF’s decentralized operation while improving practical convergence. Quantitative analysis of slice-overlap policies and their impact on dynamic convergence remains a promising direction for future work, particularly when update rates, observation patterns, or environmental conditions vary over time.

\section{Conclusion}\label{sec:conclusion}

This paper introduced Semantic Fusion, a formal framework for slice-local reasoning in decentralized multi-agent systems. By grounding agent updates in ontology-scoped memory slices and enforcing validation through shared semantics, we showed that agents can maintain global coherence and modularity without explicit message passing or centralized control. The architecture accommodates both symbolic and learned components while ensuring that updates remain structured, auditable, and semantically well-formed.

Our primary contribution is theoretical. We establish that agents operating over slice-scoped memory projections preserve semantic coherence (Theorem~\ref{thm:semantic-coherence}), converge under refresh (Theorem~\ref{thm:slice-consistency}), and remain causally isolated from irrelevant state changes (Theorem~\ref{thm:causal-isolation}). Most importantly, we show that local execution traces are stuttering-bisimilar to their global projections (Theorem~\ref{thm:slice_bisim}), enabling temporal properties to be verified slice-by-slice. We extend these results to probabilistic proposals (Theorem~\ref{thm:prob_slice_bisim}) and demonstrate that global safety properties expressible over ontology slices can be validated compositionally from local traces (Theorem~\ref{thm:slice-completeness-safety}). We further establish almost-sure slice convergence and corroborate this behavior empirically through dynamic re-synchronization (Theorem~\ref{thm:dynamic-convergence}). This establishes that structured, slice-scoped semantics can serve as a reliable foundation for decentralized reasoning, even when agents operate asynchronously or under partial observability.

To illustrate that the formal constructs introduced in Section~\ref{sec:core-constructs} can be realized in practice, we implemented a lightweight reference architecture that instantiates the SF model. While not a primary contribution, this implementation operationalizes the theoretical components and supports empirical validation of the core theorems. By instantiating ontology-scoped slices with structured validation and scoped refresh propagation, it provides a concrete demonstration of feasibility.

Semantic Fusion is designed to support heterogeneous and distributed agent systems, including those that incorporate learned or generative components. The framework provides a principled basis for decentralized execution, slice-scoped validation, and bounded communication under asynchronous or probabilistic conditions. By allowing agents to maintain alignment through shared semantics rather than centralized control, SF offers a robust foundation for verifiable reasoning in distributed settings.

\appendix

\section{Full Proofs}\label{app:proofs}

\subsection{Proof of Theorem~\ref{thm:slice_bisim} (Slice--Global Stuttering Bisimulation)}
\label{app:proof-thm10}

Let $\mathcal{G} = (Q,\;\xrightarrow{}\!)$ be the global LTS over memory states
$\mathcal{M}(t)$, and let $\mathcal{G}\!\mid_{O_a}$ be its projection onto
entities in the slice ontology~$O_a$.
Let $\mathcal{T}_a = (Q_a,\;\Rightarrow\!)$ be the local LTS of agent~$a$
over its semantic slice $S_a(t)=(O_a,M_a(t))$,
with transitions \textsc{prop} (validated local update) and
\textsc{sync} (refresh integration).

We assume:
\begin{enumerate}
  \item[\textbf{(S1)}] \emph{Slice--scoped validation.} Every integrated update is ontology--valid and affects only the authoring agent's slice.
  \item[\textbf{(S2)}] \emph{Reliable refresh propagation.} Each slice--relevant update eventually reaches every interested agent.
  \item[\textbf{(S3)}] \emph{Deterministic merge, no reordering.} Retrieved updates are merged exactly once in commit order.
  \textit{Remark.} This assumption can be satisfied in practice by enforcing a CRDT-style merge policy or applying timestamp-based total ordering per agent slice. Such mechanisms preserve the model's semantics while allowing decentralized update integration under asynchronous conditions.
\end{enumerate}

Then a stuttering bisimulation $\mathcal{R}_a \subseteq Q_a \times Q$ exists such that
\[
  \exists\,t' \le t:\;\bigl(M_a(t),\,\pi_{O_a}(\mathcal{M}(t'))\bigr)\in\mathcal{R}_a,
\]
so $\mathcal{T}_a \;\dot\sim_{\!\text{st}}\; \mathcal{G}\!\mid_{O_a}$.

\paragraph{Relation definition.}
For $t'\le t$, relate $q_a=M_a(t)$ to $q_G=\pi_{O_a}(\mathcal{M}(t'))$.  $\mathcal{R}_a$ is non--empty because the two stores are initialized identically at~$t=0$.

\paragraph{Label agreement.}
Since $M_a(t)\subseteq\pi_{O_a}(\mathcal{M}(t'))$, the valuations over $O_a$ coincide; hence $\mathcal{T}_a(q_a)=\mathcal{G}\!\mid_{O_a}(q_G)$.

\paragraph{Forth direction.}
\begin{itemize}
  \item \textsc{prop}: the same update appears immediately in $\mathcal{G}$; one global step satisfies the forth obligation.
  \item \textsc{sync}: the remote update was already committed globally at some $t^*\le t$, so a finite stutter in $\mathcal{G}$ leads from $q_G^*=\pi_{O_a}(\mathcal{M}(t^*))$ to $q_G$, satisfying the obligation.
\end{itemize}

\paragraph{Back direction.}
Identical reasoning: slice--relevant global steps eventually appear locally (\textbf{A2}); irrelevant steps correspond to local stutters.

\paragraph{Conclusion.} $\mathcal{R}_a$ meets both stuttering clauses, establishing the bisimulation. \qed

\subsection{Proof of Theorem~\ref{thm:prob_slice_bisim} (Probabilistic Slice--Global Bisimulation)}
\label{app:proof-thm12}

Let $\mathcal{P}_G = (Q_p,\xrightarrow{\phantom{p}})$ be the probabilistic LTS over global memory; $\mathcal{P}_G\!\mid_{O_a}$ is its projection.
Let $\mathcal{P}_a=(Q_a^p,\Rightarrow, \mathcal{L}_p)$ be the agent's probabilistic LTS with \textsc{prop} (sampling from $\gamma_a$) and \textsc{sync} (refresh).
Under slice--scoped validation, reliable refresh, and independent proposal sampling, there exists a probabilistic bisimulation $\mathcal{R}_a^p \subseteq Q_a^p \times Q_p$ such that
\[
  \exists\,t'\le t:\;\bigl(M_a(t),\,\pi_{O_a}(\mathcal{M}(t'))\bigr)\in\mathcal{R}_a^p,
  \quad\text{and}\quad
  \mathcal{P}_a \sim_{\text{pbis}} \mathcal{P}_G\!\restriction_{O_a}.
\]

\paragraph{1. Relation and coupling.}
For states $q_a=M_a(t)$ and $q_G=\pi_{O_a}(\mathcal{M}(t'))$ with $t'\le t$, define a coupling $\kappa$ between successor distributions $\mu_a$ and $\mu_G$:
\begin{itemize}
  \item \textsc{sync}$\,(\Delta S_b^{\text{valid}})$: pair with the \emph{same} global state $q_G'=q_G$ (probability~1) because the update is already present.
  \item \textsc{prop}: sample $u\sim\gamma_a(S_a)$.
    \begin{itemize}
      \item If $u$ is invalid, pair self--loops.
      \item If $u$ is valid, pair $q_a' = M_a(t)\cup\{u\}$ with $q_G' = \pi_{O_a}(\mathcal{M}(t+1))$ that integrates $u$, using probability $p_u$.
    \end{itemize}
\end{itemize}
$\kappa$ preserves total probability and has support in $\mathcal{R}_a^p$.

\paragraph{2. Label agreement.}
Subset relation ensures equal valuations over $O_a$.

\paragraph{3. Forth/back clauses.}
Coupling and Lemma~\ref{lem:delivery} establish matching probabilistic steps in both directions; irrelevant global transitions map to local stutters of measure~1.

\paragraph{4. Conclusion.}
$\mathcal{R}_a^p$ with couplings $\kappa$ satisfies the probabilistic bisimulation definition \cite{segala1995probbisim}. Hence $\mathcal{P}_a \sim_{\text{pbis}} \mathcal{P}_G\!\mid_{O_a}$. \qed

\subsection{Proof of Theorem~\ref{thm:slice-completeness-safety} (Slice--Local Completeness for Safety Properties)}
\label{app:proof-slice-completeness}

\paragraph{Restatement.}
For any safety formula $\varphi$ over atoms in $O_a$,
if all agent traces satisfy $\varphi$
($\forall a\in A,\, \mathcal{T}_a\models\varphi$),
then the global execution $\mathcal{G}$ satisfies $\varphi$.

\paragraph{Key lemmas.}
\begin{enumerate}
\item[\textbf{L1}] (\textit{Slice cover}.)
The family of ontologies $\{O_a\}_{a\in A}$ forms a cover of~$\mathcal{O}$:
every entity $e\in \mathcal{O}$ belongs to at least one slice.
(This is a system design assumption.)
\item[\textbf{L2}] (\textit{Stuttering preservation}.)
Safety properties in \textbf{LTL} \emph{and in the CTL$^\ast$ fragment
that omits the \emph{next} operator $X$}
are preserved under stuttering equivalence, as established in CTL and CTL* fragments
\cite{Emerson1986ModelCheck, kupferman2000modelcheck}.

\item[\textbf{L3}] (\textit{Projection bisimulation}.)
From Theorem~10,
$\mathcal{T}_a \simdot_{\mathrm{st}} \mathcal{G}\!\mid_{O_a}$.
\end{enumerate}

\paragraph{Proof.}
Assume, for contradiction,
that $\mathcal{G}\not\models \varphi$.
Then there exists a finite bad prefix
$\pi = G(0)\ldots G(k)$ that violates~$\varphi$
(safety $\Rightarrow$ finite counterexample).

By \textbf{L1},
every atomic proposition appearing in~$\pi$
is drawn from some ontology slice~$O_a$.
Let $a^\ast$ be an agent whose slice covers at least one
entity modified in $\pi$.
Projecting $\pi$ onto $O_{a^\ast}$
yields a finite trace $\pi'$
that still violates $\varphi$,
because $\varphi$ refers only to atoms in $O_{a^\ast}$.

By \textbf{L3} there exists a stuttering equivalent trace
in $\mathcal{T}_{a^\ast}$ matching $\pi'$.
By \textbf{L2} stuttering preserves safety counterexamples,
so $\mathcal{T}_{a^\ast}\not\models\varphi$,
contradicting the premise that all agents satisfy $\varphi$.

Therefore $\mathcal{G}\models\varphi$, completing the proof. \hfill\qedsymbol

\section{Reference Architecture Detail}\label{app:ref-arch}
\begin{algorithm}[H]
\caption{Asynchronous Agent Lifecycle in the SF Reference Architecture}
\begin{algorithmic}[1]
\State Initialize agent objectives
\State Construct semantic slice from ontology and initial memory
\While{task not complete}
    \State Reason asynchronously over local slice
    \If{structured update proposal is generated}
        \State Validate proposal against ontology
        \If{proposal is valid}
            \State Issue \emph{scoped} refresh notification to subscribers (entity \emph{IDs} only)
            \State Update local memory with validated change
        \EndIf
    \EndIf
    \State Listen for incoming refresh notifications
    \If{received notification intersects slice}
        \State Pull full update $\rightarrow$ revalidate $\rightarrow$ integrate
    \EndIf
\EndWhile
\end{algorithmic}
\end{algorithm}

\section{Experimental Methodology}

\subsection{Environment and Initial Conditions}

The simulation environment is a discrete rectangular grid of
$\texttt{world\_width} \times \texttt{world\_height}$ symbolic zones. Each zone
has a unique coordinate label and no hidden physical dynamics: all state
changes arise from agent-generated semantic updates, not from an exogenous
world model. The system contains no ground-truth survivor map; all survivor
state is induced entirely by search-agent proposals.

At initialization, zone coordinates are written to global memory and to the
local memories of agents whose ontology slices permit the corresponding
prefix. Search zones are assigned deterministically by shuffling the grid with
a fixed RNG seed and partitioning the zones evenly among the search agents.
Each search agent begins at the first zone in its assigned list.

\subsection{Ontology and Slice Configuration}

The ontology defines a small set of task-level semantic predicates used by
agents: \texttt{Survivor}, \texttt{ZoneStatus}, \texttt{Relay},
\texttt{Rescue}, \texttt{Bid}, \texttt{AgentPos}, and \texttt{ZoneCoord}.  Each
agent is given a slice that specifies which of these predicates it may read
and propose.

In the full-overlap configuration used for correctness tests, all search
agents observe \texttt{Survivor}, \texttt{ZoneStatus}, \texttt{AgentPos}, and
\texttt{ZoneCoord}; relay and rescue agents additionally observe and propose
\texttt{Relay}, \texttt{Rescue}, and \texttt{Bid}. For reduced-overlap
experiments, prefixes are distributed randomly across relay and rescue agents
while ensuring that each predicate has at least one owning agent.

Refresh propagation follows the selective-delivery semantics of the formal
model: an update on predicate $P$ is delivered only to agents whose slices
contain~$P$, and delivery succeeds with probability \texttt{comm\_prob}.

\subsection{Agent Observations, Actions, and Policies}

All agents operate asynchronously in discrete global ticks. At each tick, an
agent executes one local step consisting of: (i) reading its slice-local memory
state, (ii) sampling any internal stochastic triggers, and (iii) performing
slice-permitted semantic updates or movement actions.

\paragraph{Search agents.}
Each search agent moves deterministically through its assigned zone list.
At each zone, it samples a survivor-indication variable (detected with
probability~0.3, none with~0.7). If the update is slice-admissible, the agent
proposes \texttt{Survivor@zone = detected|none} and marks the zone as searched.

\paragraph{Relay agents.}
Relay agents read their slice for active survivor indications and move toward
zones requiring relay coverage. Upon arrival, a relay agent signals presence by
writing \texttt{Relay@zone = active}. Zone selection breaks ties stochastically
to avoid deterministic contention.

\paragraph{Rescue agents.}
Rescue agents compute bids for survivor zones based on distance and a small
random perturbation. A rescue proceeds only after the agent’s local slice shows
an active relay at the target zone and that the agent’s bid is highest among
slice-visible bids. Service duration is sampled from a fixed integer range,
after which the agent writes a rescue completion record and resets the zone
status.

\subsection{Asynchrony, Timers, and Execution Flow}

All agents run as independent asynchronous processes. The global tick advances
only once all agents complete their current step. Agents may be removed
mid-execution to test failure containment. After the main sequence of ticks,
the system performs several final “flush” ticks without new proposals to ensure
that outstanding refreshes are incorporated before analysis.

\subsection{Global Memory, Validation, and Logging}

Upon validation, the update is applied to the agent’s local memory and recorded in the global canonical store used for post hoc analysis. Other agents learn of the update only through the model’s selective refresh mechanism, which delivers it to agents whose slices intersect its scope. Invalid proposals (e.g., adversarial injections used to test
validator behavior) are rejected locally and do not modify any state.

The simulation records all proposed updates, deliveries, slice-level refreshes,
and local/global memory snapshots. These logs serve as the data source for the
bisimulation checks, semantic-coherence tests, and probabilistic-convergence
analyses reported in Section~6.

\subsection{Tunable Experimental Parameters}

All experiments rely on a configurable set of parameters: delivery probability
(\texttt{comm\_prob}), slice-overlap fraction (\texttt{fan\_out}), random seed,
agent counts, grid dimensions, duration, and schedule of invalid-update
injections. These parameters allow the scenario to be run in high-overlap or
sparse-overlap conditions, deterministically or with controlled stochasticity.

\section{Online Resources}
The code used to validate the theoretical results and conduct the simulation experiments is publicly available at: \url{https://github.com/fiazaich/SAR_MAS_sim}.

\bibliographystyle{ACM-Reference-Format}
\bibliography{sf-bibliography}

\end{document}